\numberwithin{equation}{section}
\newcommand{\cDD}{{\mathcal D}}
\newcommand{\cHH}{{\mathcal H}}
\newcommand{\de}{\partial}
\newcommand{\nat}{{\Delta_{\theta}}}
\newcommand{\nar}{{\Delta_{r}}}
\newcommand{\snat}{\sqrt{\Delta_{\theta}}}
\newcommand{\snar}{\sqrt{\Delta_{r}}}
\newcommand{\diag}{{\rm diag}}
\newcommand{\eqn}{\begin{eqnarray}}
\newcommand{\feqn}{\end{eqnarray}}
\newtheorem{theorem}{Theorem}
\newtheorem{lemma}{Lemma}
\newtheorem{corollary}{Corollary}
\newtheorem{remark}{Remark}
\begin{document}

\title[The Dirac Equation in Kerr-Newman-AdS Black Hole Background]{The Dirac Equation in Kerr-Newman-AdS Black Hole Background}

\author{Francesco Belgiorno}
\address{Dipartimento di Fisica,
Universit\`a degli Studi di Milano,  Via Celoria 16, 20133 Milano, Italy}
\email{belgiorno@mi.infn.it}
\author{Sergio L. Cacciatori}
\address{
Dipartimento di Fisica, Universit\`a degli Studi dell'Insubria,
Via Valleggio 11, 22100 Como, Italy}
\email{sergio.cacciatori@uninsubria.it}

\begin{abstract}
We consider the Dirac equation on the Kerr-Newman-AdS black hole background.
We first perform the variable separation for the Dirac equation and define the Hamiltonian operator 
$\hat H$. Then we show that for a 
massive Dirac field with mass $\mu\geq \frac{1}{2l}$ essential selfadjointness of $\hat H$ 
on $C_0^{\infty} ((r_+,\infty) \times S^2)^4$ is obtained 
even in presence of the boundary-like behavior of infinity in an asymptotically AdS black hole background. 
Furthermore qualitative spectral properties of the Hamiltonian are taken into account 
and in agreement with the existing results concerning the case of stationary 
axi-symmetric asymptotically flat black holes we infer the absence of time-periodic
and normalizable solutions of the Dirac equation around the black hole in the non-extremal case.
\end{abstract}

\maketitle

\section{Introduction}
Black holes play a very important role in many aspects of theoretical physics, starting from the fact that they 
provide nontrivial exact solutions of Einstein equations, going through their thermodynamical properties and 
ending up with their role of ``hydrogen atoms of quantum gravity''
\cite{Callan:1996ey}. Their relevance is fundamental also beyond the theoretical aspects.
The uniqueness and the no hair theorems (\cite{Chrusciel:1994sn})
give strong restrictions on the possible signatures one could look for in astrophysical observations. For real situations
one needs to take account the fact that black holes are not simply vacuum solutions. Accretion disks in active galactic 
nuclei constitute one of the most studied questions about ``real'' black hole physics. Another interesting problem 
which concerns rotating charged black holes is their electrical shielding by a charged dust; its solution could
give rise to new astrophysics.\footnote{This problem was pointed out to us by Aldo Treves.} Moreover, other signatures 
could arise from quantum effects like black hole discharge (see for example \cite{damo,belmart,belgcaccia}) or angular 
momentum loss.\\
Among the theoretical models which can provide 
a deeper understanding of the mathematical properties of the field equations for the matter fields
living on the given black hole background and also of thermodynamical contributions of the matter fields to 
black hole thermodynamics the asymptotically AdS case is interesting under many respects, both for the well-known 
relevance of the AdS backgrounds in the AdS-CFT conjecture and in supergravity, and 
for the peculiar thermodynamical properties of AdS black holes, for which the canonical ensemble is well-defined 
\cite{hawpage}.   
In order to avoid the presence of closed time-like curves, 
one has to take into account the universal covering of such an AdS black hole background, which is not globally 
hyperbolic \cite{hawpage}. 
Notwithstanding, physics can still be uniquely defined if essential selfadjointness properties are obtained 
at least for suitable sets of the field parameters like e.g. the mass (cf. \cite{wald}). 
Herein, we limit ourselves to consider the specific problem  of a spin $\frac 12$ 
massive charged Dirac field on the background of an Anti-de Sitter charged rotating black hole. 
Our aim is twofold: on one hand, we pursue the variable separation following 
\cite{Kamran:1984mb} and the analysis of the Hamiltonian description 
in the given background, generalizing the known results for the asymptotically flat Kerr-Newman solution 
\cite{Finster:2000jz};
on the other hand, we consider the problem of essential selfadjointness of the Hamiltonian $\hat H$ 
on $C_0^{\infty} ((r_+,\infty) \times S^2)^4$ for the Dirac field 
on a Kerr-Newman-AdS black hole background in presence also of a magnetic charge. 
This second task is nontrivial, requiring many technical steps in order to be settled. 
The main difficulty arises from the fact that even if for the Hamiltonian version of the 
Dirac equation a Chandrasekhar-like ansatz for variable separation is available, one cannot obtain 
a full reduction of the Hamiltonian into an orthogonal sum of partial wave operators involving only 
the radial variable, and this has to be attributed to the black hole rotation which allows only axial symmetry. 
Nevertheless, 
essential selfadjointness of $\hat H$ on $C_0^{\infty} ((r_+,\infty) \times S^2)^4$
is shown to be equivalent to the essential selfadjointness of another Hamiltonian $\hat H_0$ 
on $C_0^{\infty} ((r_+,\infty) \times S^2)^4$ which is 
defined in a different (but unitarily equivalent) Hilbert space, and a complete reduction into an orthogonal sum 
of partial wave operators involving only the radial variable for $\hat H_0$ is shown to be available. 
This turns out to be useful also for the analysis of qualitative spectral properties of $\hat H$. 
It is worth mentioning that the aforementioned problem associated with variable separation in Chandrasekhar ansatz 
and the occurrence of two Hilbert spaces in the analysis of the Dirac equation 
have been already pointed out  in \cite{Finster:2000jz} for the case of the Dirac equation
on a black hole background of the Kerr-Newman family, which has been considered in 
several studies \cite{finster-rn,finster-axi,yamada,schmid,baticschmid,hafner} (see also \cite{dafermos});  
the above part of our analysis can be of interest also for that case.  
Moreover, our analysis points out some relevant differences to be related to the 
AdS background considered herein, and also introduces some interesting analysis to be associated with 
a magnetically charged black hole: 
we find the Dirac charge quantization as a condition ensuring essential selfadjointness 
for the Hamiltonian operator $\hat H$ on $C_0^{\infty} ((r_+,\infty) \times S^2)^4$.  
Furthermore, as a corollary of our qualitative spectral analysis, we can also conclude that
in the non-extremal case there is no normalizable time-periodic solution of the Dirac equation, in
agreement with the result for the Kerr-Newman case \cite{finster-axi}.\\
The plan of the paper is the following. In section \ref{knads} we describe the background geometry. In
section \ref{diraceq} we consider the equation for the Dirac field.
Section \ref{hamilton} is devoted to the study of the Hamiltonian formulation. In section
\ref{essauto} the essential selfadjointness of the Hamiltonian is analyzed, and in section \ref{eigen-equation}
some spectral properties are deduced. In particular, we can conclude that 
in the non-extremal case the point spectrum is empty. Three appendices complete our analysis.

\section{The Kerr-Newman-AdS solution.}
\label{knads}
We will give a short description of the background geometry underlying our problem. It arises as follows.
One first solves the Einstein-Maxwell equations with a cosmological constant, and next adds
a Dirac field minimally coupled to the electromagnetic field. The Einstein-Maxwell action is
\eqn
S[g_{\mu\nu}, A_\rho]=-\frac 1{16\pi} \int (R-2\Lambda)\sqrt {-\det g} d^4x -\frac 1{16\pi}\int \frac 14 F_{\mu\nu}F^{\mu\nu}
\sqrt {-\det g} d^4x \ ,
\feqn
where $\Lambda=-\frac 3{l^2}$ is the cosmological constant, $R$ the scalar curvature and $F_{\mu\nu}$ the field strength associated to
the potential 1--form $A$:
\eqn
&& F=dA \ , \qquad\ F_{\mu\nu}=\partial_\mu A_\nu -\partial_\nu A_\mu \ ;\\
&& R=g^{\mu\nu} R_{\mu\nu}\ , \qquad\
R_{\mu\nu}=\partial_\rho \Gamma^\rho_{\mu\nu} -\partial_\nu \Gamma^\rho_{\mu\rho}+\Gamma^{\sigma}_{\mu\nu} \Gamma^\rho_{\sigma\rho}
-\Gamma^{\sigma}_{\mu\rho} \Gamma^\rho_{\sigma\nu} \ ,\\
&& \Gamma^\mu_{\nu\rho} =\frac 12 g^{\mu\sigma}(\partial_\nu g_{\sigma\rho}+\partial_\rho g_{\sigma\nu}-\partial_\sigma g_{\nu\rho})\ .
\feqn
The equations of motion are
\eqn
&& R_{\mu\nu}-\frac 12 (R-2\Lambda) g_{\mu\nu}= -2 \left(F_\mu^{\ \ \rho} F_{\nu\rho} -\frac 14 g_{\mu\nu} F_{\rho\sigma}F^{\rho\sigma}\right)
\ , \\
&& \partial_\mu (\sqrt {-\det g} F^{\mu\nu} )=0\ .
\feqn
With respect to a set of vierbein one forms
\eqn
e^i= e^i_\mu dx^\mu\ , \qquad\ i=0,1,2,3\ ,
\feqn
we have
\eqn
ds^2 =g= \eta_{ij} e^i\otimes e^j\ , \qquad\ g_{\mu\nu}=\eta_{ij} e^i_\mu e^j_\nu\ ,
\feqn
where $\eta=\diag (-1,1,1,1)$ is the usual flat Minkowski metric, so that, as usual, we define the $so(1,3)$ valued spin connection one
forms $\omega^i_{\ j}$ such that
\eqn
de^i +\omega^i_{\ j}\wedge e^j=0\ .
\feqn
We will consider the following background solution.\\

The metric is
\eqn
&& ds^2 =-\frac {\Delta_r}{\rho^2}\left[ dt-\frac {a\sin^2 \theta}\Xi d\phi \right]^2 +\frac {\rho^2}{\Delta_r} dr^2
+\frac {\rho^2}{\nat} d\theta^2 +\nat \frac {\sin^2 \theta}{\rho^2} \left[ a dt -\frac {r^2+a^2}\Xi d\phi \right]^2\ ,
\feqn
where
\eqn
&& \rho^2=r^2+a^2\cos^2 \theta\ , \qquad \Xi =1-\frac {a^2}{l^2}\ , \qquad \nar=(r^2+a^2)\left(1+\frac {r^2}{l^2} \right)-2mr+z^2\ ,\\
&& \nat =1-\frac {a^2}{l^2} \cos^2 \theta \ , \qquad z^2=q_e^2 +q_m^2\ ,
\feqn
and the electromagnetic potential and field strength are
\eqn
&& A=-\frac {q_e r}{\rho \snar} e^0 -\frac {q_m \cos \theta}{\rho \snat \sin \theta} e^1\ ,\\
&& F=-\frac {1}{\rho^4}[q_e (r^2-a^2\cos^2\theta)+2q_m ra \cos \theta]e^0 \wedge e^2\cr
&&\qquad\ +\frac {1}{\rho^4}[q_m (r^2-a^2\cos^2\theta)-2q_e ra \cos \theta]e^3 \wedge e^1\ ,
\feqn
where we introduced the vierbein
\eqn
&& e^0 =\frac {\snar}{\rho} \left( dt-\frac {a\sin^2 \theta}\Xi d\phi \right)\ ,\\
&& e^1 =\frac {\snat \sin \theta}\rho \left( a dt -\frac {r^2+a^2}{\Xi} d\phi  \right)\ , \\
&& e^2 =\frac \rho{\snar} dr \ ,\\
&& e^3 =\frac \rho{\snat} d\theta \ .
\feqn
The parameters $m, a, q_e, q_t$ are related to the mass, angular momentum, electric and magnetic charge by the Komar integrals
(see \cite{Caldarelli:1999xj})
\eqn
M=\frac m{\Xi^2}\ , \quad J=\frac {am}{\Xi^2}\ , \quad Q_e=\frac {q_e}\Xi\ , \quad Q_m=\frac {q_m}\Xi\ .
\feqn
We are interested in the case when an event horizon (corresponding to $\nat=0$) appears, that is for $m\geq m_{ext}$,
\eqn
m_{ext}=\frac l{3\sqrt 6}\left(\sqrt {\left(1+\frac {a^2}{l^2}\right)^2+\frac {12}{l^2}(a^2+z^2)}+2\frac {a^2}{l^2}+2 \right)\cr
\qquad\ \times \left(\sqrt {\left(1+\frac {a^2}{l^2}\right)^2+\frac {12}{l^2}(a^2+z^2)}-\frac {a^2}{l^2}-1 \right)^{\frac 12} \ .
\label{mass-extr}
\feqn

\newpage

\section{The Dirac equation.}
\label{diraceq}

The Dirac equation for a charged massive particle of mass $\mu$ and electric charge $e$ is
\eqn
(i\gamma^\mu D_\mu -\mu )\psi=0\ ,
\feqn
where $D$ is the Koszul connection on the bundle $S\otimes U(1)$, $S$ being the spin bundle over the AdS-Kerr-Newman manifold, that is
\eqn
D_\mu =\de_\mu +\frac 14 \omega_\mu^{\ ij} \Gamma_i \Gamma_j +ie A_\mu\ .
\feqn
Here $\omega^{ij}=\omega^i_{\ k} \eta^{kj}$ are the spin connection one forms associated to a vierbein $v^i$, such that $ds^2=\eta_{ij} v^i \otimes v^j$, $\eta$
being the usual Minkowski metric.
$\gamma_\mu$ are the local Dirac matrices, related to the point independent Minkowskian Dirac matrices $\Gamma_i$ by the relations
$\gamma_\mu =v_\mu^i \Gamma_i$.\\
Here we use the representation
\eqn
\Gamma^0 =\left(
\begin{array}{cc}
\mathbb {O} & -\mathbb {I} \\
-\mathbb {I} & \mathbb {O}
\end{array}
\right)\ ,
\qquad
\vec \Gamma =\left(
\begin{array}{cc}
\mathbb {O} & -\vec \sigma \\
\vec \sigma & \mathbb {O}
\end{array}
\right)\ ,
\feqn
where
\eqn
\mathbb {O} =\left(
\begin{array}{cc}
0 & 0 \\
0 & 0
\end{array}
\right)\ ,
\qquad
\mathbb {I} =\left(
\begin{array}{cc}
1 & 0 \\
0 & 1
\end{array}
\right)\ ,
\feqn
and $\vec \sigma$ are the usual Pauli matrices
\eqn
\sigma_1  =\left(
\begin{array}{cc}
0 & 1 \\
1 & 0
\end{array}
\right)\ ,
\qquad
\sigma_2  =\left(
\begin{array}{cc}
0 & -i \\
i & 0
\end{array}
\right)\ ,
\qquad
\sigma_3  =\left(
\begin{array}{cc}
1 & 0 \\
0 & -1
\end{array}
\right)\ .
\feqn
Thus
\eqn
\gamma_\mu \gamma_\nu +\gamma_\nu \gamma_\mu =-2 g_{\mu\nu}\ .
\feqn

We can now separate variables following the general results of \cite{Kamran:1984mb}. Let us introduce the null
Newman-Penrose (symmetric) frame
\eqn
&& \theta^1=\frac 1{\sqrt 2} |Z(r,\theta)|^{\frac 12} \left[ \frac {W(r)}{Z(r,\theta)} \left( dt+\frac {a\sin^2\theta}\Xi d\phi \right)
+\frac {dr}{W(r)} \right]\ ,\\
&& \theta^2=\frac 1{\sqrt 2} |Z(r,\theta)|^{\frac 12} \left[ \frac {W(r)}{Z(r,\theta)} \left( dt+\frac {a\sin^2\theta}\Xi d\phi \right)
-\frac {dr}{W(r)} \right]\ ,\\
&& \theta^3=\frac 1{\sqrt 2} |Z(r,\theta)|^{\frac 12} \left[ \frac {X(\theta)}{Z(r,\theta)} \left( a dt-\frac {r^2+a^2}\Xi d\phi \right)
+i\frac {\sin \theta d\theta}{X(\theta)} \right]\ ,\\
&& \theta^4=\bar \theta^3\ ,
\feqn
where
\eqn
Z(r,\theta)= \frac {r^2+a^2 \cos^2 \theta}{\Xi}\ , \quad W(r)=\frac {\snar}{\Xi^{\frac 12}}\ , \quad
X(\theta)=\frac {\snat \sin \theta}{\Xi^{\frac 12}}\ ,
\feqn
so that
\eqn
ds^2 =-2 (\theta^1 \theta^2 -\theta^3 \theta^4)\ ,
\feqn
and
\eqn
A=-\frac 1{\sqrt {2 |Z(r,\theta)}|} \left[ \frac {H(r)}{W(r)}(\theta^1+\theta^2) + \frac {G(\theta)}{X(\theta)} (\theta^3+\theta^4) \right]\ ,
\feqn
with
\eqn
H(r)=Q_e r \ , \quad G(\theta)=Q_m \cos \theta\ .
\feqn
Note that
\eqn
&& e^0 =\frac 1{\sqrt 2} (\theta^1+\theta^2)\ , \qquad e^1=\frac 1{\sqrt 2} (\theta^3+\theta^4)\ , \\
&& e^3 =\frac 1{\sqrt 2} (\theta^1-\theta^2)\ , \qquad e^4=\frac 1{\sqrt 2 \ i} (\theta^3-\theta^4)\ .
\feqn
The Petrov type D condition ensures the existence of a {\it phase function} ${\mathcal B} (r,\theta)$ such that
\eqn
d{\mathcal B} =\frac 1{4Z(r,\theta)} \left( -2a \frac {\cos \theta}\Xi dr -2ar \frac {\sin \theta}\Xi d\theta \right)\ ,
\feqn
which indeed gives
\eqn
{\mathcal B}(r,\theta)=\frac i4 \log \frac {r-ia\cos\theta}{r+ia\cos\theta} \ .
\feqn
Now let us write the Dirac equation as
\eqn
H_D \psi=0\ .
\feqn
Under a transformation $\psi \mapsto S^{-1}\psi$, with
\eqn
S=Z^{-\frac 14}\diag (e^{i{\mathcal B}},e^{i{\mathcal B}},e^{-i{\mathcal B}},e^{-i{\mathcal B}})\ ,
\feqn
it changes as
\eqn
S^{-1} H_D S (S^{-1}\psi)=0\ .
\feqn
If we multiply this equation times
\eqn
U=iZ^{\frac 12}\diag (e^{2i{\mathcal B}},-e^{2i{\mathcal B}},-e^{-2i{\mathcal B}},e^{-2i{\mathcal B}})\ ,
\feqn
and introduce the new wave function
\eqn
\tilde \psi =(\nat \nar)^{\frac 14} S^{-1} \psi\ , \label{reduced}
\feqn
then the Dirac equation takes the form
\eqn
({\mathcal R}(r)+{\mathcal A}(\theta)) \tilde \psi =0\ , \label{dirac}
\feqn
where
\eqn
&& {\mathcal R}=\left(
\begin{array}{cccc}
i\mu r  & 0 & -\snar {\mathcal D}_+ & 0 \\
0 & -i\mu r & 0 & -\snar {\mathcal D}_- \\
-\snar {\mathcal D}_- & 0 & -i\mu r & 0 \\
0 & -\snar {\mathcal D}_+ & 0 & i \mu r
\end{array}
\right) \ , \\
&& {\mathcal A}=\left(
\begin{array}{cccc}
-a\mu \cos \theta  & 0 & 0 & -i\snat {\mathcal L}_- \\
0 & a\mu \cos \theta  & -i\snat {\mathcal L}_+ & 0  \\
0 & -i\snat {\mathcal L}_- & -a\mu \cos \theta & 0  \\
-i\snat {\mathcal L}_+ & 0 & 0 &  a \mu \cos \theta
\end{array}
\right) \ ,
\feqn
and
\eqn
&& {\mathcal D}_{\pm} =\de_r \pm \frac 1{\nar}\left( (r^2+a^2)\de_t -a\Xi \de_\phi +ie q_e r \right)\ ,\\
&& {\mathcal L}_{\pm} =\de_\theta +\frac 12 \cot \theta \pm \frac i{\nat \sin \theta} \left( \Xi \de_\phi -a\sin^2 \theta \de_t
+ie q_m \cos \theta \right) \ .
\feqn
Separation of variables can then be obtained searching for solutions of the form
\eqn\label{separation}
\tilde \psi (t,\phi, r, \theta)=e^{-i\omega t}e^{-ik \phi}
\left(
\begin{array}{c}
R_1 (r) S_2 (\theta)\\
R_2 (r) S_1 (\theta)\\
R_2 (r) S_2 (\theta)\\
R_1 (r) S_1 (\theta)
\end{array}
\right)\ , \qquad k \in \mathbb{Z}+\frac 12 \ .
\feqn

\section{Hamiltonian formulation.}
\label{hamilton}

The Hamiltonian for the Dirac equation can be read from (\ref{dirac}) rewriting it in the form \cite{Finster:2000jz}
\eqn
i\de_t \tilde \psi =H \tilde \psi\ .
\label{hamilton-dirac}
\feqn
Indeed we find
\eqn
H=\left[ \left(1-\frac {\nar}{\nat} \frac {a^2 \sin^2 \theta}{(r^2+a^2)^2} \right)^{-1}
\left( \mathbb {I}_4 -\frac {\snar}{\snat} \frac {a\sin \theta}{r^2 +a^2} B C \right)\right](\tilde{\mathcal R}+\tilde{\mathcal A})\ ,\label{hamiltonian}
\feqn
where $\mathbb {I}_4$ is the $4\times 4$ identity matrix,
\eqn
&&\tilde{\mathcal R}= -\frac {\mu r \snar}{r^2+a^2}
\left(
\begin{array}{cccc}
0 & 0 & 1 & 0 \\
0 & 0 & 0 & 1 \\
1 & 0 & 0 & 0 \\
0 & 1 & 0 & 0
\end{array}
\right)
+
\left(
\begin{array}{cccc}
{\mathcal E}_- & 0 & 0 & 0 \\
0 & -{\mathcal E}_+ & 0 & 0 \\
0 & 0 & -{\mathcal E}_+ & 0 \\
0 & 0 & 0 & {\mathcal E}_-
\end{array}
\right)\ , \\
&& \tilde{\mathcal A}=\frac {a \mu \cos \theta \snar}{r^2+a^2}
\left(
\begin{array}{cccc}
0 & 0 & i & 0 \\
0 & 0 & 0 & i \\
-i & 0 & 0 & 0 \\
0 & -i & 0 & 0
\end{array}
\right)
+
\left(
\begin{array}{cccc}
0 & -{\mathcal M}_- & 0 & 0 \\
{\mathcal M}_+ & 0 & 0 & 0 \\
0 & 0 & 0 & {\mathcal M}_- \\
0 & 0 & -{\mathcal M}_+ & 0
\end{array}
\right)\ ,\\
&& {\mathcal E}_\pm =i \frac {\nar}{a^2+r^2} \left[ \de_r \mp \frac {a\Xi}{\nar} \de_\phi \pm i \frac {e q_e r}{\nar} \right]\ ,\\
&& {\mathcal M}_\pm =\frac {\snar \snat}{r^2+a^2} \left[ \de_\theta +\frac 12 \cot \theta \pm \frac {i\Xi}{\nat \sin \theta} \de_\phi
\mp \frac {e q_m \cot \theta}{\nat} \right]\ ,
\feqn
and
\eqn
B=
\left(
\begin{array}{cccc}
0 & 0 & -i & 0 \\
0 & 0 & 0 & i \\
i & 0 & 0 & 0 \\
0 & -i & 0 & 0
\end{array}
\right)\ , \qquad
C=\left(
\begin{array}{cccc}
0 & 0 & 0 & i \\
0 & 0 & -i & 0 \\
0 & i & 0 & 0 \\
-i & 0 & 0 & 0
\end{array}
\right)
\feqn
satisfy $[B,C]=0$, $B^2=C^2=\mathbb {I}_4$. Cf. also \cite{Finster:2000jz} for the Kerr-Newman case.   
We need now to specify the Hilbert space. We do it as follows.\\
If we foliate spacetime in $t=constant$ slices ${\mathcal S}_t$, the metric on any slice (considering the shift vectors) is
\eqn
d\gamma^2 =\gamma_{\alpha \beta} dx^\alpha dx^\beta \ ,
\feqn
where $\alpha=1,2,3$ and
\eqn
\gamma_{\alpha\beta} =g_{\alpha\beta}-\frac {g_{0\alpha} g_{0\beta}}{g_{00}} \ ,
\feqn
and local measure
\eqn
d\mu_3= \sqrt {\det \gamma}\ dr d\theta d\phi =\frac {\sin \theta}{\Xi} \frac {\rho^3}{\sqrt {\nar -a^2 \nat \sin^2 \theta}}
\ dr d\theta d\phi \ . \label{measure3}
\feqn
In particular the four dimensional measure factors as
\eqn
d\mu_4= \sqrt {-g_{00}} d\mu_3 dt \ . \label{factor}
\feqn
The action for a massless uncharged Dirac particle is then
\eqn
S=\int_{\mathbb{R}} dt \int_{{\mathcal S}_t} \sqrt {-g_{00}}\ ^t \psi^* \Gamma^0 \gamma^\mu D_\mu \psi d\mu_3 \ ,
\feqn
where the star indicates complex conjugation. Here with ${\mathcal S}_t$ we mean the range of coordinates parameterizing the region external to the
event horizon: $r>r_+$, that is ${\mathcal S}_t:= {\mathcal S} =(r_+, \infty)\times (0,\pi)\times (0,2\pi)$.
Then, the scalar product between wave functions should be
\eqn
\langle \psi | \chi \rangle = \int_{{\mathcal S}} \sqrt {-g_{00}}\ ^t \psi^* \Gamma^0 \gamma^t \chi d\mu_3 \ .
\feqn
We can now use (\ref{measure3}), (\ref{reduced}) and the relation
\eqn
\gamma^2=e^t_0 \Gamma^0 +e^t_1 \Gamma^1\ ,
\feqn
to express the product in the space of reduced wave functions (i.e. (\ref{reduced})):
\eqn
\langle \tilde \psi |\tilde \chi \rangle =\int_{r_+}^{\infty} dr \int_0^\pi d\theta \int_0^{2\pi} d\phi \frac {r^2+a^2}\nar
\frac {\sin \theta}{\snat}\ ^t\tilde \psi^* \left(\mathbb {I}_4 +\frac {\snar}{\snat} \frac {a\sin \theta}{r^2 +a^2} B C \right)\tilde \chi \ ,
\label{product}
\feqn
where we have dropped a factor $\Xi^{-\frac 12}$.
Note that the matrix in the parenthesis is the inverse of the one in the square brackets in (\ref{hamiltonian}), and coincides
with the one introduced in \cite{Finster:2000jz} (improved to the AdS case).\\
We show that the above scalar product is positive definite. With this aim,
being $\pm 1$ the eigenvalues of $BC$, we need to prove that
\eqn
\eta:=\sup_{r\in (r_+,\infty), \theta\in (0,\pi)} \alpha(r,\theta)<1\ ,
\feqn
where
\eqn
\alpha(r,\theta)=\frac {\snar}{\snat} \frac {a\sin\theta}{r^2+a^2}\ .
\feqn
We can write $\alpha(r,\theta)=\beta(r)\gamma(\theta)$, with
\eqn
\gamma(\theta)=\frac {\sin\theta}{\snat}  \ .
\feqn
Then
\eqn
\gamma'(\theta)=\frac {\cos\theta}{\nat^{\frac 32}}\left( 1-\frac {a^2}{l^2} \right)
\feqn
so that $\gamma$ reaches its maximum at $\theta=\pi/2$ and
\eqn
\gamma(\pi/2)=1\ ,
\feqn
which implies
\eqn
\alpha (r,\theta)\leq \beta(r).
\feqn
Next, from
\eqn
0=\nar(r_+)
\feqn
we have
\eqn
z^2-2mr_+=-(r_+^2+a^2)(r_+^2+l^2)/l^2 <0\ ,
\feqn
and then, for $r\geq r_+$ we have $z^2-2mr<0$.
Thus
\eqn
\beta^2 (r)=\frac {a^2\nar}{(r^2+a^2)^2}=\frac {a^2}{l^2}\frac {r^2+l^2}{r^2+a^2}+a^2\frac {z^2-2mr}{(r^2+a^2)^2}\leq
\frac {a^2}{l^2}\frac {r^2+l^2}{r^2+a^2} =h(r)\ .
\feqn
Now, the last function is a decreasing function of $r$, so that for $r\geq r_+>0$ we have $h(r)\leq h(r_+)< h(0)$, so that
\eqn
\beta^2 (r)\leq h(r_+)=\frac {a^2}{l^2}\frac {r_+^2+l^2}{r_+^2+a^2}<h(0)=1\ ,
\feqn
and
\eqn
\eta \leq \sqrt {h(r_+)} <1\ .
\feqn

\section{Essential selfadjointness of \boldmath{$\hat H$}.}
\label{essauto}

Let us introduce the space of functions ${\mathcal L}^2:=(L^2((r_+,\infty) \times S^2; d\mu))^4$ with 
measure 
\eqn
d\mu = \frac {r^2+a^2}\nar \frac {\sin \theta}\snat dr d\theta d\phi.
\feqn
and define ${\cHH}_{<>}$ as the Hilbert space ${\mathcal L}^2$ with the scalar product
(\ref{product}). We will also consider a second Hilbert space
$\cHH_{()}$, which is obtained from  ${\mathcal L}^2$ with the scalar product
\eqn
( \psi | \chi ) =\int_{r_+}^{\infty} dr \int_0^\pi d\theta \int_0^{2\pi} d\phi \frac {r^2+a^2}\nar
\frac {\sin \theta}{\snat}\ ^t \psi^* \chi \ =\int d\mu ^t\psi^* \chi \ .
\label{ri-product}
\feqn
It is straightforward to show that $||\cdot||_{<>}$ and $||\cdot||_{()}$ are equivalent norms. 
It is also useful to introduce $\hat \Omega^2:{\mathcal L}^2\to {\mathcal L}^2$ as the multiplication
operator by $\Omega^2 (r,\theta)$: 
\eqn
\Omega^2 (r,\theta):=\mathbb {I}_4 +\alpha (r,\theta)\;  B C\ . \label{omega2}
\feqn
Then we have
\eqn
\langle \psi | \chi \rangle =\int d\mu ^t\psi^* \Omega^2 \chi \ = ( \psi | \hat \Omega^2 \chi ) \ .
\label{red_prod}
\feqn
We introduce also $\hat \Omega^{-2}:{\mathcal L}^2\to {\mathcal L}^2$ as the multiplication operator by $\Omega^{-2}$:  
\eqn
\Omega^{-2} (r,\theta):=\frac{1}{1-\alpha^2 (r,\theta)}\left(\mathbb {I}_4 -\alpha (r,\theta)\;
B C \right)\ ,  \label{omega-2}
\feqn
and analogously $\hat \Omega,\hat \Omega^{-1}$ are defined as operators from
${\mathcal L}^2$ to ${\mathcal L}^2$ which multiply by $\Omega (r,\theta)$,
$\Omega^{-1} (r,\theta)$ respectively, where $\Omega$ and $\Omega^{-1}$ are defined as
the principal square root of $\Omega^2$ and $\Omega^{-2}$ respectively. 
The following properties are useful for our subsequent analysis. 
As matrix-valued functions,
both $\Omega^2$ and $\Omega^{-2}$ are trivially bounded, and
this holds true also for $\Omega$ and $\Omega^{-1}$. $\hat \Omega^2,\hat \Omega,\hat \Omega^{-2},\hat \Omega^{-1}$
are injective and surjective; moreover, as operators from $\cHH_{()}$ to $\cHH_{()}$, they are 
positive, bounded and selfadjoint. 
Injectivity of $\hat \Omega^2$, $\hat \Omega^{-2}$ can be proven by direct inspection. 
Being $\hat \Omega^2$ injective, also $\hat \Omega$ is injective. 
Surjectivity is also easily deduced. Indeed,
being $\hat \Omega^{-2}$ defined everywhere, $\hat \Omega^2$ is also surjective and then
$\hat \Omega$ is surjective too. The same properties hold true for $\hat \Omega^{-2}$ and
$\hat \Omega^{-1}$.  
Positivity is easily proven by carrying the matrices
$\Omega^2,\Omega,\Omega^{-2},\Omega^{-1}$ into the diagonal form and then
by taking into account that $\sup_{r,\theta} \alpha<1$ for $a^2<l^2$.
Analogously also boundedness is proven. Positivity implies selfadjointness.\\

Let us set $H_0 :=\tilde{\mathcal R}+\tilde{\mathcal A}$, which is formally selfadjoint on $\cHH_{()}$, 
and define the operator $\hat  H_0$ on ${\mathcal L}^2$ 
with 
\eqn
&&D(\hat H_0)= C_0^{\infty} ((r_+,\infty) \times S^2)^4 =:\cDD \cr
&&\hat H_0 \chi = H_0 \chi, \quad \chi \in \cDD.
\feqn
Notice that $\cDD$ is dense in ${\cHH}_{()}$. 
Let us point out that for the formal differential expression $H$ in (\ref{hamiltonian}), 
which is formally selfadjoint on $\cHH_{<>}$, 
one can write $H=\Omega^{-2} H_0$, 
Then we define on ${\mathcal L}^2$ 
the differential operator $\hat H= \hat \Omega^{-2} \hat H_0$, with 
\eqn
&&D(\hat H) = \cDD\cr
&&\hat H \chi = H \chi, \quad \chi \in \cDD.
\label{h-h0}
\feqn
As to the symmetry of $\hat H$ on $\cDD \subset \cHH_{<>}$, we note that for all $f,g\in \cDD$ it holds 
$\langle f | \hat H g \rangle = (f | \hat H_0 g)$ and
$\langle \hat H f | g \rangle = (\hat H f |\hat \Omega^2 g) = (\hat \Omega^2 \hat H f | g) = (\hat H_0 f | g)$,
and then $\hat H$ is symmetric iff
$\hat H_0$ is symmetric on $\cDD \subset \cHH_{()}$. 
Symmetry of $\hat H_0$ is proven by direct inspection:
The only problem
could be the integration by parts in $\theta$ and $r$.
The $r$ derivatives arise in the scalar product from the terms involving the differential
operators ${\mathcal E}_\pm$, and they appear in the following form:
$$
\int \frac {\sin \theta}\snat \ ^t\tilde \psi^* {\mathrm {diag}} (i\de_r,-i\de_r,-i\de_r,i\de_r)
 \tilde \chi dr d\theta d\phi\ ,
$$
where ${\mathrm {diag}} (i\de_r,-i\de_r,-i\de_r,i\de_r)$ stays for the diagonal matrix
whose non vanishing entries are explicitly given.\\
The $\theta$ derivatives come out from ${\mathcal L}_\pm$ and appear in the form
$$
\int {\sin \theta} \frac 1{\snar} \ ^t\tilde \psi^* E \frac 1{\sqrt {\sin \theta}} \de_\theta (\sqrt {\sin\theta} \tilde \chi) dr d\theta d\phi\ ,
$$
where
$$
E=\left(
\begin{array}{cccc}
0 & -1 & 0 & 0 \\
1 & 0 & 0 & 0 \\
0 & 0 & 0 & 1 \\
0 & 0 & -1 & 0
\end{array}
\right)\ .
$$
{F}rom these expressions the symmetry of the operator $\hat H_0$ is easily checked.\\

We prove that there exists an unitary isomorphism between $\cHH_{<>}$ and
$\cHH_{()}$. We follow a line of thought which is strictly analogous
to the one allowing to prove the unitary equivalence between the
Hilbert space $L^2 (a,b,q)$ and $L^2 (a,b)$, where the former space
has measure $q(x) dx$ and $q:(a,b)\to {\mathbb R}$ is a measurable,
almost everywhere positive and locally integrable function. Cf. 
\cite{weidhilb}, pp. 247-248.
\begin{lemma}
The map $V_{\Omega}:\cHH_{<>} \mapsto \cHH_{()}$ defined by
\eqn
(V_{\Omega}
\psi) (r,\theta,\phi)= \Omega (r,\theta) \psi (r,\theta,\phi)
\feqn
is an unitary isomorphism of Hilbert spaces.
\end{lemma}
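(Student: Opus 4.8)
The plan is to recognize $V_{\Omega}$ as the exact matrix-valued analogue of the scalar change-of-weight unitary of Weidmann, with $\Omega$ in the role of $\sqrt{q}$. Since an isometry of one Hilbert space \emph{onto} another is automatically unitary, it suffices to verify three things: that $V_{\Omega}$ maps $\mathcal{L}^2$ into $\mathcal{L}^2$, that it intertwines the two scalar products, and that it is surjective. The first point is immediate from the fact, recorded just above the statement, that $\Omega$ is a bounded matrix-valued function, so that $V_{\Omega}\psi$ is again in $\mathcal{L}^2$ and the two norms are equivalent.

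The heart of the argument is the isometry. Writing out the scalar product of $\cHH_{()}$ on the images one has
\eqn
(V_{\Omega}\psi | V_{\Omega}\chi) &=& \int d\mu \; {}^t(\Omega \psi)^* (\Omega \chi) = \int d\mu \; {}^t\psi^* \left( {}^t\Omega^* \, \Omega \right) \chi\ ,
\feqn
so everything reduces to the pointwise identity ${}^t\Omega^* \, \Omega = \Omega^2$, i.e.\ to the \emph{Hermiticity} of $\Omega(r,\theta)$. This is where the matrix structure must be controlled. The key algebraic input is that $B$ and $C$ are each Hermitian (${}^tB^*=B$, ${}^tC^*=C$) and commute, so that ${}^t(BC)^* = {}^tC^*\,{}^tB^* = CB = BC$; since $\alpha$ is real this makes $\Omega^2 = \mathbb {I}_4 + \alpha\, BC$ a positive Hermitian matrix at each $(r,\theta)$, and therefore its principal square root $\Omega$ is Hermitian as well. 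Hence ${}^t\Omega^*\,\Omega = \Omega\,\Omega = \Omega^2$ and
\eqn
(V_{\Omega}\psi | V_{\Omega}\chi) &=& \int d\mu \; {}^t\psi^* \Omega^2 \chi = \langle \psi | \chi \rangle\ ,
\feqn
which is the full inner-product preservation, and in particular $|| V_{\Omega}\psi ||_{()} = || \psi ||_{<>}$.

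Surjectivity will be obtained by exhibiting the inverse. Given $\phi \in \cHH_{()}$, set $\psi := \hat\Omega^{-1}\phi$; boundedness of $\Omega^{-1}$ guarantees $\psi \in \mathcal{L}^2 = \cHH_{<>}$, and $V_{\Omega}\psi = \Omega\,\Omega^{-1}\phi = \phi$ provided $\Omega\,\Omega^{-1} = \mathbb {I}_4$. This last identity follows because $\Omega$ and $\Omega^{-1}$ are commuting positive matrices (both functions of the single matrix $BC$) whose product squares to $\Omega^2\Omega^{-2}$, and a direct computation using $(BC)^2=\mathbb {I}_4$ gives $\Omega^2\Omega^{-2} = (\mathbb {I}_4 + \alpha BC)\frac{1}{1-\alpha^2}(\mathbb {I}_4 - \alpha BC) = \mathbb {I}_4$; uniqueness of the positive square root of $\mathbb {I}_4$ then forces $\Omega\,\Omega^{-1} = \mathbb {I}_4$. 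An isometric surjection between Hilbert spaces is injective and unitary, which completes the proof.

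I expect the only genuinely delicate point to be the Hermiticity of the matrix square root $\Omega$: everything else is either the scalar Weidmann computation or bounded-multiplication bookkeeping. As indicated, this reduces to the Hermiticity of $BC$ together with the standard fact that the principal square root of a positive Hermitian matrix is again Hermitian; and because $\Omega$ is an explicit bounded continuous function of the real scalar $\alpha(r,\theta)$ (with $\sup_{r,\theta}\alpha<1$ for $a^2<l^2$, so that the eigenvalues $1\pm\alpha$ stay positive), no measurability or domain subtleties intervene.
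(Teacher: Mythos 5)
Your proof is correct and follows essentially the same route as the paper: the isometry $(V_{\Omega}\psi|V_{\Omega}\chi)=(\psi|\hat\Omega^2\chi)=\langle\psi|\chi\rangle$ plus surjectivity via the bounded inverse $\hat\Omega^{-1}$. The only difference is that you write out explicitly the Hermiticity of $\Omega(r,\theta)$ and the identity $\Omega\,\Omega^{-1}=\mathbb{I}_4$, which the paper delegates to the properties of $\hat\Omega$, $\hat\Omega^{-1}$ recorded in the paragraph preceding the lemma.
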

\begin{proof}
One has
\eqn
(V_{\Omega} \psi | V_{\Omega} \chi) = (\hat \Omega \psi |\hat \Omega \chi) =
(\psi|\hat \Omega^2 \chi) = \langle \psi | \chi \rangle.
\feqn
Then, due to the aforementioned properties of $\hat \Omega$ and $\hat \Omega^{-1}$,
$V_{\Omega}$ is an (unitary) isomorphism of Hilbert spaces \cite{weidhilb}. Note also that
$V_{\Omega}^{\ast}:\cHH_{()} \mapsto \cHH_{<>}$ and that $V_{\Omega}^{\ast}$ acts as a multiplication operator
by $\Omega^{-1} (r,\theta)$.
\end{proof}
It is useful to introduce 
$V_{\Omega} \hat H  V_{\Omega}^{-1}$ which is defined on the domain
$V_{\Omega} \cDD \subset \cHH_{()}$ and which is unitarily
equivalent to the operator $\hat H$ defined on $\cDD\subset
\cHH_{<>}$. Then the problem of the essential selfadjointness of
$\hat H$ in $\cDD\subset \cHH_{<>}$ is equivalent to the problem of
essential selfadjointness of $V_{\Omega} \hat H  V_{\Omega}^{-1}$ on
$V_{\Omega} \cDD \subset \cHH_{()}$. 
By taking into account $\hat H = \hat \Omega^{-2} \hat H_0$ the aforementioned
problem amounts to the essential selfadjointness of
$\hat \Omega^{-1} \hat H_0 \hat \Omega^{-1}$ in $V_{\Omega} \cDD\subset \cHH_{()}$.\\

Now we can prove the following result. 

\begin{theorem}
$\hat H$ is essentially selfadjoint 
if and only if
$\hat H_0$ is essentially selfadjoint. 
\end{theorem}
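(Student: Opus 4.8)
The plan is to prove the equivalence by computing the Hilbert-space adjoint of $\hat H$ explicitly and relating it to the adjoint of $\hat H_0$, thereby avoiding any comparison of (weighted) deficiency subspaces. Throughout I use the standard criterion that a densely defined symmetric operator $S$ is essentially selfadjoint if and only if its adjoint $S^{*}$ is symmetric, equivalently selfadjoint (cf. \cite{weidhilb}). Since the excerpt already establishes that $\hat H$ is symmetric on $\cDD\subset\cHH_{<>}$ precisely when $\hat H_0$ is symmetric on $\cDD\subset\cHH_{()}$, and that the latter indeed holds by direct inspection, both operators are symmetric and the criterion applies to each. I write $\hat H^{*}$ for the adjoint of $\hat H$ taken in $\cHH_{<>}$ and $\hat H_0^{*}$ for the adjoint of $\hat H_0$ taken in $\cHH_{()}$.

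First I would identify $\hat H^{*}$. For $f\in\cDD$ and an arbitrary $g\in{\mathcal L}^2$, the identity recorded just before the preceding Lemma extends verbatim, using only boundedness and selfadjointness of $\hat\Omega^{\pm 2}$ in $\cHH_{()}$ (see (\ref{red_prod})), to give
\begin{equation}
\langle \hat H f\,|\,g\rangle = (\hat H f\,|\,\hat\Omega^2 g) = (\hat\Omega^2 \hat H f\,|\,g) = (\hat H_0 f\,|\,g).
\end{equation}
Because $\|\cdot\|_{<>}$ and $\|\cdot\|_{()}$ are equivalent norms, the functional $f\mapsto\langle\hat H f\,|\,g\rangle$ is $\|\cdot\|_{<>}$-continuous on $\cDD$ exactly when $f\mapsto(\hat H_0 f\,|\,g)$ is $\|\cdot\|_{()}$-continuous; hence $g\in D(\hat H^{*})$ if and only if $g\in D(\hat H_0^{*})$. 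On this common domain, comparing the defining relation $\langle\hat H f\,|\,g\rangle=\langle f\,|\,\hat H^{*}g\rangle=(f\,|\,\hat\Omega^2\hat H^{*}g)$ with $(\hat H_0 f\,|\,g)=(f\,|\,\hat H_0^{*}g)$ over the dense set $\cDD$ forces $\hat\Omega^2\hat H^{*}g=\hat H_0^{*}g$, so that
\begin{equation}
D(\hat H^{*})=D(\hat H_0^{*}), \qquad \hat H^{*}=\hat\Omega^{-2}\,\hat H_0^{*}.
\end{equation}
The hard part is precisely this identification of the two adjoints carried out across the two distinct inner products; the care needed is in verifying that the relation above holds for every $g$ in the Hilbert space and not merely for smooth $g$, which is exactly what selfadjointness and boundedness of $\hat\Omega^{\pm2}$ furnish.

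Finally I would transfer symmetry across the two inner products. Using $\langle u\,|\,v\rangle=(u\,|\,\hat\Omega^2 v)$, the relation $\hat H^{*}=\hat\Omega^{-2}\hat H_0^{*}$, and the fact that $\hat\Omega^{\pm 2}$ are selfadjoint and mutually inverse in $\cHH_{()}$, for all $u,v\in D(\hat H^{*})=D(\hat H_0^{*})$ one computes
\begin{equation}
\langle \hat H^{*}u\,|\,v\rangle=(\hat\Omega^{-2}\hat H_0^{*}u\,|\,\hat\Omega^2 v)=(\hat H_0^{*}u\,|\,v), \qquad \langle u\,|\,\hat H^{*}v\rangle=(u\,|\,\hat H_0^{*}v).
\end{equation}
Thus $\hat H^{*}$ is symmetric in $\cHH_{<>}$ if and only if $\hat H_0^{*}$ is symmetric in $\cHH_{()}$. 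By the criterion above, symmetry of $\hat H^{*}$ is equivalent to essential selfadjointness of $\hat H$, and symmetry of $\hat H_0^{*}$ is equivalent to essential selfadjointness of $\hat H_0$, which yields the stated equivalence. As a consistency check, the same result follows through the unitary isomorphism $V_{\Omega}$ of the Lemma, which reduces essential selfadjointness of $\hat H$ to that of $\hat\Omega^{-1}\hat H_0\hat\Omega^{-1}$ on $V_{\Omega}\cDD=\cDD$ (the equality holding because $\Omega,\Omega^{-1}$ are smooth bounded matrix-valued functions preserving $\cDD$); the direct adjoint computation above is preferable in that it never requires comparing the deficiency equations of $\hat\Omega^{-1}\hat H_0\hat\Omega^{-1}$ and $\hat H_0$.
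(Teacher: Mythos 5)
Your proof is correct, and it takes a genuinely different route from the paper's. The paper first transports everything to the single Hilbert space $\cHH_{()}$ via the unitary map $V_{\Omega}$, reducing essential selfadjointness of $\hat H$ to that of $\hat \Omega^{-1}\hat H_0\hat \Omega^{-1}$ on $V_{\Omega}\cDD$, and then computes the first and second adjoints of that product using the two rules $(\hat B\hat A)^{\ast}=\hat A^{\ast}\hat B^{\ast}$ for bounded $\hat B$ and $(\hat D\hat C)^{\ast}=\hat C^{\ast}\hat D^{\ast}$ for boundedly invertible $\hat C$, so that the condition $(\hat \Omega^{-1}\hat H_0\hat \Omega^{-1})^{\ast}=(\hat \Omega^{-1}\hat H_0\hat \Omega^{-1})^{\ast\ast}$ collapses to $\hat H_0^{\ast}=\hat H_0^{\ast\ast}$. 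You instead stay with the two inner products and identify the $\cHH_{<>}$-adjoint of $\hat H$ directly: the single identity $\langle \hat H f\,|\,g\rangle=(\hat H_0 f\,|\,g)$ together with the equivalence of the norms gives $D(\hat H^{\ast})=D(\hat H_0^{\ast})$ and $\hat H^{\ast}=\hat \Omega^{-2}\hat H_0^{\ast}$, after which symmetry of the two adjoints is transferred and the standard criterion (a densely defined symmetric operator is essentially selfadjoint iff its adjoint is symmetric) closes the argument; all steps check out, relying only on boundedness, selfadjointness and mutual invertibility of $\hat\Omega^{\pm 2}$ in $\cHH_{()}$, which the paper establishes. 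Your route is more elementary in that it dispenses with both $V_{\Omega}$ and the product-adjoint formulas (in particular ex.\ 4.18 of \cite{weidhilb}), and it yields the explicit relation $\hat H^{\ast}=\hat \Omega^{-2}\hat H_0^{\ast}$ as a by-product, which makes the subsequent statement $\hat T_H=\hat \Omega^{-2}\hat T_{H_0}$ about selfadjoint realizations essentially immediate; what the paper's approach buys is the reusable unitary picture $V_{\Omega}\hat H V_{\Omega}^{-1}=\hat \Omega^{-1}\hat H_0\hat \Omega^{-1}$, which it exploits again in that later argument. One minor remark: your closing consistency check asserts $V_{\Omega}\cDD=\cDD$; this does hold because $\Omega^{\pm 1}$ are smooth bounded matrix functions on the open set $(r_+,\infty)\times(0,\pi)$, but it is not needed for your main argument, and the paper sidesteps the point by working on $V_{\Omega}\cDD$ throughout.
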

\begin{proof}
The following results are useful:\\
a) let $\hat A$ be a densely defined operator from ${\mathcal H}_1$ to
${\mathcal H}_2$ and let $\hat B$ be a bounded operator from ${\mathcal H}_2$ to ${\mathcal H}_3$. Then
$(\hat B \hat A)^{\ast}= \hat A^{\ast} \hat B^{\ast}$.\\
b) Let $\hat C$, $\hat D$ be densely defined operators defined from ${\mathcal H}_1$ to
${\mathcal H}_2$ and from ${\mathcal H}_2$ to ${\mathcal H}_3$ respectively. Assume that
$\hat D \hat C$ is densely defined from ${\mathcal H}_1$ to ${\mathcal H}_3$, and assume that
$\hat C$ is injective with $\hat C^{-1}\in {\mathcal B}({\mathcal H}_2,{\mathcal H}_1)$. Then
$(\hat D \hat C)^{\ast}=\hat C^{\ast} \hat D^{\ast}$ (cf. ex. 4.18, p.74 of \cite{weidhilb}).\\
In our case, in order to apply (a) one identifies $\hat B$ with $\hat \Omega^{-1}$ and
$\hat A$ with the product
$\hat H_0 \hat \Omega^{-1}$; moreover, in order to apply (b) one identifies
$\hat C$ with $\hat \Omega^{-1}$ and
$\hat D$ with $\hat H_0$. As a consequence, one finds that the essential selfadjointness condition,
which amounts to
$(\hat \Omega^{-1} \hat H_0 \hat \Omega^{-1})^{\ast}=(\hat \Omega^{-1} \hat H_0
\hat \Omega^{-1})^{\ast \ast}$,
is implemented if and only if $\hat H_0^{\ast}=\hat H_0^{\ast \ast}$, i.e. if and only if $\hat H_0$
is essentially selfadjoint on $\cDD\subset \cHH_{()}$. (Indeed,
from (a) we know that $(\hat \Omega^{-1} \hat H_0 \hat \Omega^{-1})^{\ast}=
(\hat H_0 \hat \Omega^{-1})^{\ast} \hat \Omega^{-1}$, and from (b) one finds
$(\hat H_0 \hat \Omega^{-1})^{\ast} \hat \Omega^{-1}=\hat \Omega^{-1} \hat H_0^{\ast} \hat \Omega^{-1}$.
Analogously, $(\hat \Omega^{-1} \hat H_0 \hat \Omega^{-1})^{\ast \ast}=[(\hat \Omega^{-1} \hat H_0
\hat \Omega^{-1})^{\ast}]^{\ast}=
[\hat \Omega^{-1} \hat H_0^{\ast} \hat \Omega^{-1}]^{\ast}=(\hat H_0^{\ast} \hat \Omega^{-1})^{\ast}
\hat \Omega^{-1}=
\hat \Omega^{-1} \hat H_0^{\ast \ast} \hat \Omega^{-1}$.) 
\end{proof}
To sum up, we have shown that the essential selfadjointness of $\hat H$ on $\cDD\subset \cHH_{<>}$ is
equivalent to the essential selfadjointness of $\hat H_0$ on $\cDD\subset \cHH_{()}$. Furthermore,  
if $\hat T_{H_0}$ is a selfadjoint realization of 
$H_0$, to be defined on $D(\hat T_{H_0}):={\mathfrak D}\subset \cHH_{()}$, we introduce   
$\hat T_{H}:=\hat \Omega^{-2} \hat T_{H_0}$ on $\cHH_{<>}$. One can show analogously that 
$\hat T_{H}$ is a selfadjoint realization of $H$ on ${\mathfrak D}\subset \cHH_{<>}$. 
\newpage \noindent
(Proof: by the above unitary isomorphism (cf. Lemma 1) 
let us define $\tilde{\hat{T}}_{H} = \hat \Omega \hat T_{H} \hat \Omega^{-1} = 
\hat \Omega^{-1} \hat T_{H_0} \hat \Omega^{-1}$ on $\Omega {\mathfrak D}$. One finds easily (cf. Theorem 1) 
$\tilde{\hat{T}}_{H}^{\ast} = \hat \Omega^{-1} \hat{ T}_{H_0}^{\ast} \hat \Omega^{-1}=
\hat \Omega^{-1} \hat T_{H_0} \hat \Omega^{-1}= \tilde{\hat{ T}}_{H}$, and then 
$\tilde{\hat{ T}}_{H}$ is selfadjoint. As a consequence 
$\hat T_{H}= \hat \Omega^{-2} \hat T_{H_0}$ on ${\mathfrak D}\subset \cHH_{<>}$ is selfadjoint too, and  
$\hat \Omega^{-2} \hat T_{H_0} f = H f$, $f\in {\mathfrak D}$).


\subsection{Essential selfadjointness of \boldmath{$\hat H_0$} on  
$\cDD=C_0^{\infty} ((r_+,\infty) \times S^2)^4\subset {\cHH}_{()}$}

It is useful to recall that ${\cHH}_{()}\simeq L^2((r_{+},\infty), \frac{r^2+a^2}{\Delta_r} dr)^2 
\otimes L^2 ((0,\pi), \frac{\sin (\theta)}{\sqrt{\Delta_{\theta}}} d\theta)^2 \otimes L^2 (0,2\pi)$, 
where the scalar product is the ``usual'' one ($(f,g):=\int d\mu f^{\ast} g$; for short, the measure is indicated 
by $d\mu$ and $f,g$ are scalar or vector functions depending on the case). 
We introduce the following unitary operator (cf. \cite{winklmeierthesis}) $V:{\cHH}_{()}\to {\cHH}_{()}$:
\eqn
V=
\frac{1}{\sqrt{2}}
\left(
\begin{array}{cccc}
 0  & -i &  0  &  i \\
 i  &  0 & -i  &  0 \\
 0  & -1 &  0  & -1 \\
-1  &  0 & -1  &  0
\end{array}
\right),
\label{def-V}
\feqn
and then we consider the operator $V \hat H_0  V^{\ast}$ on $V\cDD$. 
The latter operator is particularly suitable for the study of essential selfadjointness 
by means of variable separation. One finds:
\eqn
V \tilde{\mathcal R} V^{\ast}=
\left(
\begin{array}{cccc}
 \frac{1}{2} ({\mathcal E}_--{\mathcal E}_{+})+\mu (r) & 0 & -\frac{1}{2} i ({\mathcal E}_++{\mathcal E}_{-}) &  0\\
0  &  \frac{1}{2} ({\mathcal E}_--{\mathcal E}_{+})+\mu (r) & 0  & -\frac{1}{2} i ({\mathcal E}_++{\mathcal E}_{-}) \\
\frac{1}{2} i ({\mathcal E}_++{\mathcal E}_{-})  & 0 &   \frac{1}{2} ({\mathcal E}_--{\mathcal E}_{+})-\mu (r)  & 0 \\
0  & \frac{1}{2} i ({\mathcal E}_++{\mathcal E}_{-})  & 0  & \frac{1}{2} ({\mathcal E}_--{\mathcal E}_{+})-\mu (r)
\end{array}
\right),
\feqn
where $\mu (r):=\mu \frac{r \sqrt{\Delta_r}}{r^2+a^2}$;
moreover, it holds
\eqn
V \tilde{\mathcal A} V^{\ast}=
\left(
\begin{array}{cccc}
 0  &  0 &  -\mu a \cos(\theta) \frac{\sqrt{\Delta_r}}{r^2+a^2}  &  i {\mathcal M}_+  \\
 0  &  0 &  i {\mathcal M}_{-}  &  \mu a \cos(\theta) \frac{\sqrt{\Delta_r}}{r^2+a^2}  \\
 -\mu a \cos(\theta) \frac{\sqrt{\Delta_r}}{r^2+a^2}  &  i {\mathcal M}_+&  0  &  0 \\
 i {\mathcal M}_{-}  &  \mu a \cos(\theta) \frac{\sqrt{\Delta_r}}{r^2+a^2} &  0  &  0
\end{array}
\right).
\feqn
Then, from the explicit expressions of ${\mathcal E}_{\pm}$ and of ${\mathcal M}_{\pm}$ one
obtains
\eqn
V \tilde{\mathcal R} V^{\ast}=
\left(
\begin{array}{cc}
\frac{1}{r^2+a^2} (i a \Xi \partial_{\phi} +e q_e r +\mu r \sqrt{\Delta_r}) {\mathbb {I}}
& \frac{\Delta_r}{r^2+a^2} \partial_r {\mathbb {I}}\\
-\frac{\Delta_r}{r^2+a^2} \partial_r {\mathbb {I}} &
\frac{1}{r^2+a^2} (i a \Xi \partial_{\phi} +e q_e r -\mu r \sqrt{\Delta_r}) {\mathbb {I}}
\end{array}
\right)
\feqn
and
\eqn
V \tilde{\mathcal A} V^{\ast}=
\frac{\sqrt{\Delta_r}}{r^2+a^2}
\left(
\begin{array}{cc}
{\mathbb {O}} & {\mathbb {U}}\\
{\mathbb {U}} & {\mathbb {O}}
\end{array}
\right),
\feqn
where ${\mathbb {U}}$ is the $2\times 2$ matrix formal differential expression 
\eqn
{\mathbb {U}}=
\left(
\begin{array}{cc}
-\mu a \cos (\theta) & i \sqrt{\Delta_{\theta}} (\partial_{\theta}+\frac{1}{2} \cot ({\theta})
+ g)\\
 i \sqrt{\Delta_{\theta}} (\partial_{\theta}+\frac{1}{2} \cot ({\theta})
- g)) &
\mu a \cos (\theta)
\end{array}
\right),
\feqn
with $g:=i \frac{1}{\Delta_{\theta} \sin (\theta)} \Xi \partial_{\phi}
-\frac{1}{\Delta_{\theta}} q_m e \cot ({\theta})$. We define also $\hat {\mathbb {U}}$ to be a 
differential operator in the Hilbert space 
$L^2 ((0,\pi), \frac{\sin (\theta)}{\sqrt{\Delta_{\theta}}} d\theta)^2 \otimes L^2 (0,2\pi)$, with 
domain $D(\hat {\mathbb {U}})=L\{C_0^{\infty}(0,\pi)^2\times C_0^{\infty}(0,2\pi)\}$ 
($L\{\cdot \}$ stays for the linear hull),
and $\hat {\mathbb {U}} S = {\mathbb {U}} S$ for 
$S\in D(\hat {\mathbb {U}})$.\\ 
As a consequence of the above manipulations, we obtain $V \hat H_0  V^{\ast}$ on $V \cDD$, with 
$V \hat H_0  V^{\ast} \chi =V H_0  V^{\ast} \chi$, $\chi \in \cDD$, where the formal differential 
expression $V H_0  V^{\ast}$ is: 
\eqn
V H_0  V^{\ast}=
\left(
\begin{array}{cc}
\frac{1}{r^2+a^2} (i a \Xi \partial_{\phi} +e q_e r +\mu r \sqrt{\Delta_r}) {\mathbb {I}}
& \frac{\Delta_r}{r^2+a^2} \partial_r {\mathbb {I}} + \frac{\sqrt{\Delta_r}}{r^2+a^2} {\mathbb {U}}\\
-\frac{\Delta_r}{r^2+a^2} \partial_r {\mathbb {I}} + \frac{\sqrt{\Delta_r}}{r^2+a^2} {\mathbb {U}}&
\frac{1}{r^2+a^2} (i a \Xi \partial_{\phi} +e q_e r -\mu r \sqrt{\Delta_r}) {\mathbb {I}}
\end{array}
\right).
\label{red-h0}
\feqn
We consider the subset $\tilde\cDD$ of $\cDD$ which contains finite linear combinations of 
functions of the following form: 
\eqn\label{separa-h}
\chi (r,\theta,\phi)=\varepsilon (\phi) 
\left(
\begin{array}{c}
R_1 (r) S_2 (\theta)\\
R_2 (r) S_1 (\theta)\\
R_2 (r) S_2 (\theta)\\
R_1 (r) S_1 (\theta)
\end{array}
\right)\ , 
\feqn
where $\varepsilon (\phi)\in C_0^{\infty} (0,2\pi)$, $R(r):=\left(\begin{array}{c} R_1 (r)\\
R_2 (r)\end{array}\right)\in C_0^{\infty}(r_+,\infty)^2$ and 
$S (\theta):=\left(\begin{array}{c} S_1 (\theta)\\
S_2 (\theta)\end{array}\right)\in C_0^{\infty}(0,\pi)^2$. Then one obtains
\eqn\label{separa-h0}
V\chi (r,\theta,\phi)=\varepsilon (\phi) 
\frac{1}{\sqrt{2}}
\left(
\begin{array}{c}
-i (R_2 (r)-R_1 (r)) S_1 (\theta)\\
-i (R_2 (r)-R_1 (r)) S_2 (\theta)\\
- (R_2 (r)+R_1 (r)) S_1 (\theta)\\
- (R_2 (r)+R_1 (r)) S_2 (\theta)
\end{array}
\right).
\feqn
The subspace $L_k$ spanned by the eigenfunctions $e^{-i k \phi}$, $k\in 
{\mathbb Z}+\frac{1}{2}$ of the 
the selfadjoint operator $i \partial_{\phi}$ with anti-periodic boundary conditions at $0$ and at $2\pi$ 
is such that 
$L^2((r_{+},\infty), \frac{r^2+a^2}{\Delta_r} dr)^2 
\otimes L^2 ((0,\pi), \frac{\sin (\theta)}{\sqrt{\Delta_{\theta}}} d\theta)^2 \otimes L_k$ is 
a reducing subspace for $V \hat H_0  V^{\ast}$. 
Moreover $i \partial_{\phi}$ and $\hat {\mathbb {U}}$ trivially commute.  
Let $\hat {\mathbb {U}}_k \otimes I_k$ be the operator 
obtained by restricting $\hat {\mathbb U}$ to $C_0^{\infty} (0,\pi)^2 \otimes L_k$ ($I_k$ is the identity 
operator on $L_k$); one finds that 
$\hat {\mathbb {U}}_k$, whose formal differential expression is 
\eqn
{\mathbb {U}}_k=
\left(
\begin{array}{cc}
-\mu a \cos (\theta) & i \sqrt{\Delta_{\theta}} (\partial_{\theta}+\frac{1}{2} \cot ({\theta})+ b_k (\theta))\\
 i \sqrt{\Delta_{\theta}} (\partial_{\theta}+\frac{1}{2} \cot ({\theta})
-  b_k (\theta))) &
\mu a \cos (\theta)
\end{array}
\right),
\feqn
where $b_k (\theta):= \frac{1}{\Delta_{\theta} \sin (\theta)} \Xi k
-\frac{1}{\Delta_{\theta}} q_m e \cot ({\theta})$, 
is essentially selfadjoint on $C_0^{\infty} (0,\pi)^2$ 
for any $k\in {\mathbb Z}+\frac{1}{2}$ for $\frac{q_m e}{\Xi} \in \mathbb{Z}$ (see sect. \ref{sub-U} for details). 
Note also that $\hat {\mathbb U}=\oplus_k {\hat {\mathbb {U}}}_k \otimes I_k$. 
If one considers the selfadjoint extension $\bar{\hat {\mathbb {U}}}_k$ of $\hat {\mathbb {U}}_k$, one can show that 
$\bar{\hat {\mathbb {U}}}_k$ has purely discrete spectrum which is simple (see section 
\ref{spectrumUomega} and see also Appendix \ref{spectrumU}). 

Let us introduce  
the (normalized) eigenfunctions $S_{k;j} (\theta):=\left(
\begin{array}{c}
S_{1\; k;j} (\theta)\\
S_{2\; k;j} (\theta)
\end{array}
\right)
$
of the operator $\bar{\hat {\mathbb {U}}}_k$: 
\eqn
\bar{\hat {\mathbb {U}}}_k
\left(
\begin{array}{c}
S_{1\; k;j} (\theta)\\
S_{2\; k;j} (\theta)
\end{array}
\right)
=
\lambda_{k;j}
\left(
\begin{array}{c}
S_{1 \; k;j} (\theta)\\
S_{2 \; k;j} (\theta)
\end{array}
\right),
\label{red-angular}
\feqn
then ${\mathcal H}_{k,j}:=L^2((r_{+},\infty), \frac{r^2+a^2}{\Delta_r} dr)^2\otimes M_{k,j}$, where $M_{k,j}:=\{F_{k;j}(\theta,\phi)\}$, 
with $F_{k;j}(\theta,\phi):=S_{k;j}(\theta) \frac{e^{-i k \phi}}{\sqrt{2\pi}}$,   
is a reducing subspace for $V \hat H_0  V^{\ast}$. 
Let us define $D_{k,j}:=\tilde \cDD \cap {\mathcal H}_{k,j}$. Then 
$V \hat H_0  V^{\ast}|_{D_{k,j}}$ is such that 
$V \hat H_0  V^{\ast} (V\chi) = \omega (V\chi)$ becomes 
\eqn
\left(
\begin{array}{cc}
\frac{1}{r^2+a^2} (a \Xi k +e q_e r +\mu r \sqrt{\Delta_r}) {\mathbb {I}}
& \frac{\Delta_r}{r^2+a^2} \partial_r {\mathbb {I}} + \frac{\sqrt{\Delta_r}}{r^2+a^2} \lambda_{k;j} {\mathbb {I}} \\
-\frac{\Delta_r}{r^2+a^2} \partial_r {\mathbb {I}} + \frac{\sqrt{\Delta_r}}{r^2+a^2} \lambda_{k;j} {\mathbb {I}}  &
\frac{1}{r^2+a^2} (a \Xi k +e q_e r -\mu r \sqrt{\Delta_r}) {\mathbb {I}}
\end{array}
\right) V\chi = \omega V\chi,
\label{red-radial}
\feqn
which is equivalent to the following $2\times 2$ Dirac system for the radial part:
\eqn
\left(
\begin{array}{cc}
\frac{1}{r^2+a^2} (a \Xi k +e q_e r +\mu r \sqrt{\Delta_r})
& \frac{\Delta_r}{r^2+a^2} \partial_r  + \frac{\sqrt{\Delta_r}}{r^2+a^2} \lambda_{k;j}\\
-\frac{\Delta_r}{r^2+a^2} \partial_r  + \frac{\sqrt{\Delta_r}}{r^2+a^2} \lambda_{k;j} &
\frac{1}{r^2+a^2} (a \Xi k +e q_e r -\mu r \sqrt{\Delta_r})
\end{array}
\right)
\left(
\begin{array}{c}
X_1 (r) \\
X_2 (r)
\end{array}
\right)
=
\omega
\left(
\begin{array}{c}
X_1 (r) \\
X_2 (r)
\end{array}
\right),
\label{reduced-radial}
\feqn
where
\eqn
\left(
\begin{array}{c}
X_1 (r) \\
X_2 (r)
\end{array}
\right)
:=
\frac{1}{\sqrt{2}}
\left(
\begin{array}{c}
-i (R_2 (r)-R_1 (r)) \\
- (R_2 (r)+R_1 (r))
\end{array}
\right).
\feqn
Then we obtain $\cHH_{()}=\oplus_{k,j} {\mathcal H}_{k,j}$, and we also obtain
the following orthogonal decomposition \cite{mlak} (also called partial wave decomposition)  
of the operator
$V \hat H_0  V^{\ast}$:
\eqn
V \hat H_0  V^{\ast}=\bigoplus_{k,j} \hat h_{k,j} \otimes I_{k,j},
\feqn
where $I_{k,j}$ stays for the identity operator on $M_{k,j}$ and $\hat h_{k,j}$, which is defined on  
${\mathcal D}_{k,j}:=C_0^{\infty} (r_+,\infty)^2$, has the following formal expression:
\eqn
h_{k,j}:=
\left(
\begin{array}{cc}
\frac{1}{r^2+a^2} (a \Xi k +e q_e r +\mu r \sqrt{\Delta_r})
& \frac{\Delta_r}{r^2+a^2} \partial_r  + \frac{\sqrt{\Delta_r}}{r^2+a^2} \lambda_{k;j}\\
-\frac{\Delta_r}{r^2+a^2} \partial_r  + \frac{\sqrt{\Delta_r}}{r^2+a^2} \lambda_{k;j} &
\frac{1}{r^2+a^2} (a \Xi k +e q_e r -\mu r \sqrt{\Delta_r})
\end{array}
\right)
\label{reduction}
\feqn
In the following subsections, 
we study essential selfadjointness conditions both for the angular momentum operator
${\mathbb {U}}$ and for the reduced Hamiltonian $\hat h_{k,j}$.
Note that if $\hat h_{k,j}$ is essentially selfadjoint on ${\mathcal D}_{k,j}$, 
then $V \hat H_0  V^{\ast}$ is essentially selfadjoint on the linear hull $L\{{\mathcal D}_{k,j}\otimes 
M_{k,j}; k,j\}$ \cite{weidhilb}.

\subsubsection{Essential selfadjointness of $\hat {\mathbb {U}}_k$}
\label{sub-U}

The operator ${\mathbb {U}}_k$ is formally selfadjoint in
$L^2 ((0,\pi), \frac{\sin (\theta)}{\sqrt{\Delta_{\theta}}} d\theta)^2$ whose elements
are indicated by $S(\theta):=\left(\begin{array}{c} S_1 (\theta)\\
S_2 (\theta)\end{array}\right)$. The conditions for the essential selfadjointness of $\hat {\mathbb {U}}_k$ 
on $C_0^{\infty}(0,\pi)^2$ are determined in the following.\\
By means of the unitary transformation
\eqn
W:=
\left(
\begin{array}{cc}
0 & -i \\
1 & 0
\end{array}
\right)
\label{unitW}
\feqn
the operator $\hat {\mathbb {U}}_k$ has a formal differential expression which is carried into the following form 
which corresponds to a Dirac system \cite{weidmann}:
\eqn
W {\mathbb {U}}_k W^{\ast}=
\left(
\begin{array}{cc}
\mu a \cos (\theta) &  \sqrt{\Delta_{\theta}} (\partial_{\theta}+\frac{1}{2} \cot ({\theta})
- b_k (\theta))\\
\sqrt{\Delta_{\theta}} (-\partial_{\theta}-\frac{1}{2} \cot ({\theta})
- b_k (\theta)) &
-\mu a \cos (\theta)
\end{array}
\right).
\feqn
A further Liouville unitary transformation $R: L^2 ((0,\pi), \frac{\sin (\theta)}{\sqrt{\Delta_{\theta}}} d\theta)^2
\to L^2 ((0,\pi), \frac{1}{\sqrt{\Delta_{\theta}}} d\theta)^2$, 
\eqn
(R S)(\theta): = (\sin (\theta))^{\frac{1}{2}} 
S (\theta)=:\Theta (\theta) 
\label{liouR}
\feqn
(cf. \cite{winklmeierthesis} for the Kerr-Newman case)
allows us to determine for $\lambda \in 
{\mathbb C}$ if the limit point case or the limit circle case is implemented according to Weyl's alternative 
\cite{weidmann} by studying the differential system 
$R W {\mathbb {U}}_k W^{\ast} R^{\ast} \Theta=\lambda \Theta$, i.e.
\eqn
\left(
\begin{array}{cc}
\mu a \cos (\theta) &  \sqrt{\Delta_{\theta}} (\partial_{\theta}
- \frac{\Xi k}{\Delta_{\theta} \sin (\theta)} +
\frac{1}{\Delta_{\theta}} q_m e \cot ({\theta}) )\\
\sqrt{\Delta_{\theta}} (-\partial_{\theta}-
\frac{\Xi k}{\Delta_{\theta} \sin (\theta)}+
\frac{1}{\Delta_{\theta}} q_m e \cot ({\theta}) )&
-\mu a \cos (\theta)
\end{array}
\right)
\Theta=\lambda \Theta.
\label{ang-equa}
\feqn
We shall determine for $\lambda \in 
{\mathbb C}$ if the limit point case or the limit circle case is implemented according to Weyl's alternative 
\cite{weidmann}. 
The above equation amounts to a first order differential system which displays a
first kind singularity both at $\theta = 0$ and at $\theta = \pi$ \cite{hsieh,walter}. In the former case, one can
write
\eqn
\theta \partial_{\theta} \Theta = N \Theta,
\feqn
where the smooth matrix $N$ is regular as $\theta \to 0^+$ and 
\eqn
N_0:=\lim_{\theta \to 0^+} N =
\left(
\begin{array}{cc}
-k +\frac{q_m e}{\Xi} &  0\\
0 & k -\frac{q_m e}{\Xi}
\end{array}
\right).
\label{enne0}
\feqn
Then the eigenvalues of $N_0$ are $\pm \nu$ with $\nu=k -\frac{q_m e}{\Xi}$. One can find two linearly
independent solutions $\Theta_1,\Theta_2$ near $\theta=0$ such that
\eqn
\Theta_1 = \theta^{\nu} h_1 (\theta)
\feqn
and
\eqn
\Theta_2 = \theta^{-\nu} (h_2 (\theta)+\log (\theta)h_3 (\theta))
\feqn
where $h_i (\theta):=\left(\begin{array}{c} h_{1;i} (\theta)\\
h_{2;i} (\theta)\end{array}\right)$ 
are analytic near $\theta=0$ for $i=1,2,3$ and $h_3\neq 0$ only for $2\frac {q_m e}\Xi$
integer \cite{walter}. We recall that $k=n+\frac{1}{2}$.
It is easy to conclude that the limit point case \cite{weidmann} occurs at $\theta=0$ only for
\eqn
n\leq \frac{q_m e}{\Xi} -1\ , \quad\ \mbox{and} \quad\ n\geq \frac{q_m e}{\Xi}\ .
\label{condt0}
\feqn
The study at $\theta=\pi$ is analogous. Let us define $\alpha = \pi -\theta$.
Then it is straightforward to show that also for $\alpha =0$ there is a first kind singularity
by studying
\eqn
\alpha \partial_{\alpha} \Theta = M \Theta,
\feqn
where the smooth matrix $M$ is regular as $\alpha \to 0^+$ and
\eqn
M_0:=\lim_{\alpha \to 0^+} M =
\left(
\begin{array}{cc}
k +\frac{q_m e}{\Xi} &  0\\
0 & -k -\frac{q_m e}{\Xi}
\end{array}
\right).
\label{emme0}
\feqn
Then the eigenvalues of $M_0$ are $\pm \rho_0$, with $\rho_0=k +\frac{q_m e}{\Xi}$. One can find two linearly
independent solutions near $\alpha=0$ as above. Then the limit point case occurs at $\theta=\pi$ only for
\eqn
n\geq -\frac{q_m e}{\Xi} \ , \quad\ \mbox{and} \quad\ n\leq -\frac{q_m e}{\Xi}-1\ .
\label{condtpi}
\feqn
{F}rom (\ref{condt0}) and (\ref{condtpi}) we see that
if $\frac{q_m e}{\Xi} \in \mathbb{Z}$ then the essential selfadjointness in
$C^{\infty}_0 (0,\pi)^2$ is obtained for any $n\in \mathbb{Z}$. See also Appendix \ref{essautoU} for
a further discussion.\\
Note that if $\frac{q_m e}{\Xi} \notin \mathbb{Z}$ then the essential selfadjointness of ${\mathbb U}_k$ is obtained for
\eqn
n\in \mathbb{Z}-\{[-1-|\frac{q_m e}{\Xi}|],[|\frac{q_m e}{\Xi}|]\}\ ,
\feqn
where with $[z]$ we mean the integer part of $z$. Then there would be some $k=n+\frac{1}{2}$, with 
$n\in \{[-1-|\frac{q_m e}{\Xi}|],[|\frac{q_m e}{\Xi}|]\}$, such that 
essential selfadjointness does not occur on $C^{\infty}_0 (0,\pi)^2$. We 
limit ourselves to impose herein for the product
$\frac{q_m e}{\Xi}$ to be integer:  this has a nice interpretation,
because it can be related to the Dirac quantization condition 
(we recall that $\frac{q_m}{\Xi}=Q_m$ is the magnetic charge of the black hole);  
see also Appendix \ref{essautoU}.\\
As a consequence, we have shown that the following result holds:
\begin{theorem}
$\hat{\mathbb {U}}_k$ is essentially self adjoint on $C^{\infty}_0 (0,\pi)^2$ 
for any $k=n+\frac{1}{2}$, $n\in {\mathbb Z}$ iff  $\frac{q_m e}{\Xi} \in \mathbb{Z}$.
\end{theorem}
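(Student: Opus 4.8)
The plan is to read this statement as the conclusion of a Weyl limit-point/limit-circle analysis for the one-dimensional Dirac system (\ref{ang-equa}) and to extract the arithmetic condition on $\frac{q_m e}{\Xi}$ from the two endpoint classifications already set up. First I would note that, since $W$ in (\ref{unitW}) and the Liouville map $R$ in (\ref{liouR}) are unitary, essential selfadjointness of $\hat{\mathbb{U}}_k$ on $C_0^{\infty}(0,\pi)^2$ is equivalent to essential selfadjointness of the transformed minimal Dirac operator $R W {\mathbb{U}}_k W^{\ast} R^{\ast}$ on $C_0^{\infty}(0,\pi)^2 \subset L^2((0,\pi),\frac{1}{\sqrt{\Delta_{\theta}}}d\theta)^2$. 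For such a symmetric first-order Dirac system on the open interval $(0,\pi)$, Weyl's theory (cf. \cite{weidmann}) gives that the minimal operator is essentially selfadjoint, i.e.\ has vanishing deficiency indices, if and only if the limit point case holds at both endpoints $\theta=0$ and $\theta=\pi$; equivalently, at each endpoint only one solution (up to a scalar multiple) of $R W {\mathbb{U}}_k W^{\ast} R^{\ast}\Theta=\lambda\Theta$ is square integrable.

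Next I would invoke the endpoint structure established above: both endpoints are regular singular points. Near $\theta=0$ the system is $\theta\de_{\theta}\Theta=N\Theta$ with $N\to N_0$ as in (\ref{enne0}), whose eigenvalues are $\pm\nu$, $\nu=k-\frac{q_m e}{\Xi}$; near $\theta=\pi$, with $\al=\pi-\theta$, one has $\al\de_{\al}\Theta=M\Theta$ with $M\to M_0$ as in (\ref{emme0}) and eigenvalues $\pm\rho_0$, $\rho_0=k+\frac{q_m e}{\Xi}$. The Frobenius solutions therefore behave like $\theta^{\pm\nu}$ (resp.\ $\al^{\pm\rho_0}$), up to analytic factors and a possible logarithm when $2\frac{q_m e}{\Xi}\in\mathbb{Z}$. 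Since $\Delta_{\theta}\to\Xi>0$ at both endpoints, the weight $\frac{1}{\sqrt{\Delta_{\theta}}}$ is bounded between positive constants there, so square integrability of $\theta^{\pm\nu}$ near $0$ is governed by $\int_0 \theta^{\pm 2\nu}\,d\theta$, i.e.\ by the sign of $\pm\nu+\frac12$. Hence the limit circle case (both solutions in $L^2$) occurs exactly when $|\nu|<\frac12$ and the limit point case exactly when $|\nu|\ge\frac12$, which is (\ref{condt0}); the identical count at $\theta=\pi$ gives (\ref{condtpi}).

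Finally I would combine the two endpoint conditions to read off the dichotomy. For a fixed $k=n+\frac12$, essential selfadjointness of $\hat{\mathbb{U}}_k$ holds iff the limit point case holds simultaneously at $\theta=0$ and $\theta=\pi$, i.e.\ iff $n\notin(\frac{q_m e}{\Xi}-1,\frac{q_m e}{\Xi})$ and $n\notin(-\frac{q_m e}{\Xi}-1,-\frac{q_m e}{\Xi})$. If $\frac{q_m e}{\Xi}\in\mathbb{Z}$, each of these is an open unit interval with integer endpoints and hence contains no integer, so both conditions hold for every $n\in\mathbb{Z}$ and $\hat{\mathbb{U}}_k$ is essentially selfadjoint for all $k$. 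Conversely, if $\frac{q_m e}{\Xi}\notin\mathbb{Z}$, the interval $(\frac{q_m e}{\Xi}-1,\frac{q_m e}{\Xi})$ contains exactly one integer $n_0$, and for $k=n_0+\frac12$ the limit circle case occurs at $\theta=0$, so essential selfadjointness fails; this is the contrapositive of the ``if'' direction and completes the equivalence.

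The step I expect to be the genuine obstacle is the borderline verification at $|\nu|=\frac12$ and, more generally, the control of the logarithmic correction present when $2\frac{q_m e}{\Xi}\in\mathbb{Z}$: one must confirm that the factor $\log\theta$ does not promote the otherwise non-integrable branch $\theta^{-\nu}$ into $L^2$, so that the clean classification ``limit point $\Leftrightarrow |\nu|\ge\frac12$'' survives. Since $\int_0 \theta^{-1}(\log\theta)^2\,d\theta$ still diverges, this causes no trouble, but it is the only point requiring care; the remainder is a direct synthesis of the endpoint analysis already carried out with the standard Weyl criterion for Dirac systems.
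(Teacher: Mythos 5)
Your proposal is correct and follows essentially the same route as the paper: the unitary reduction via $W$ and $R$, the Frobenius analysis of the first-kind singularities at $\theta=0$ and $\theta=\pi$ with indicial exponents $\pm\nu=\pm(k-\frac{q_m e}{\Xi})$ and $\pm\rho_0=\pm(k+\frac{q_m e}{\Xi})$, the resulting limit-point conditions (\ref{condt0}) and (\ref{condtpi}), and the arithmetic observation that the two open unit intervals of failure contain no integer precisely when $\frac{q_m e}{\Xi}\in\mathbb{Z}$. Your explicit check that the logarithmic branch at the borderline $|\nu|=\frac12$ does not restore square integrability is the same point the paper handles via the remark that $h_3\neq 0$ only for $2\frac{q_m e}{\Xi}$ integer.
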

Note that, for $q_m=0$ one recovers the same condition as for the standard Kerr-Newman case
discussed in Refs. \cite{winklmeierthesis,yamada}.

\subsubsection{Essential selfadjointness of $\hat h_{k,j}$}

The differential expression $h_{k,j}$ is formally selfadjoint in the Hilber space 
$L^2 ((r_{+},\infty), \frac{r^2+a^2}{\Delta_r} dr)^2$.
In order to study the essential selfadjointness of the reduced Hamiltonian in $C^{\infty}_0 (r_{+},\infty)^2$ one
has to check if the limit point case occurs both at the event horizon $r=r_{+}$ and at $r=\infty$. In the
former case, it is useful introducing the following reparameterization of the metric in the non-extremal case:
\eqn
\Delta_r = \frac{1}{l^2} (r-r_{+})(r-r_{-})(r^2+(r_{+}+r_{-})r +r_{+}^2+r_{-}^2+r_{+} r_{-}+a^2+l^2),
\label{reparam-delta}
\feqn
where the parameters $m,z^2,a,l$ are replaced by $r_{+},r_{-},a,l$. 
One easily finds:
\begin{eqnarray*} 
m &=& \frac{1}{2 l^2} (r_{+}+r_{-}) (r_{+}^2+r_{-}^2+a^2+l^2)\\
z^2 &=& \frac{1}{2 l^2} r_{+} r_{-} (r_{+}^2+r_{-}^2+r_{+} r_{-}+a^2+l^2) -a^2.
\end{eqnarray*} 
This is a good reparameterization, indeed the Jacobian $J$ of the transformation is
$$ 
J = \frac{1}{2 l^4} (3 r_{+}^2+r_{-}^2+2 r_{+} r_{-}+a^2+l^2 )(r_{+}^2+3 r_{-}^2+2 r_{+} r_{-}+a^2+l^2)
(r_{+}-r_{-})
$$ 
which is strictly positive for non extremal black holes.\\
It is also evident that in the extremal case, where $r_{-}=r_{+}$ a reparameterization analogous to (\ref{reparam-delta})
is available:
$$
\Delta_r^{{\mathrm{extr}}}=
\frac{1}{l^2} (r-r_{+})^2 (r^2+2 r_{+} r +3 r_{+}^2 +a^2+l^2),
$$
by taking into account that in the extremal case the parameters $m,z^2,a,l$ are no more independent
(cf. e.g. (\ref{mass-extr})).\\
We show that the following result holds:

\newpage

\begin{theorem}
$\hat h_{k,j}$ is essentially selfadjoint on $C^{\infty}_0 (r_{+},\infty)^2$ iff $\mu l \geq \frac{1}{2}$.
\end{theorem}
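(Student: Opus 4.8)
The plan is to bring the reduced radial operator $h_{k,j}$ into the canonical form of a one-dimensional Dirac system and then to invoke Weyl's limit point/limit circle alternative \cite{weidmann}: since $h_{k,j}$ is a first-order symmetric $2\times2$ system which is regular in the interior, it is essentially selfadjoint on $C^\infty_0(r_+,\infty)^2$ if and only if the limit point case holds at \emph{both} endpoints $r=r_+$ and $r=\infty$. The key device is the ``tortoise'' change of variable $x=x(r)$ defined by $\frac{dx}{dr}=\frac{r^2+a^2}{\nar}$. Because the coefficient multiplying $\de_r$ in $h_{k,j}$ is exactly the reciprocal weight $\frac{\nar}{r^2+a^2}$, this single substitution simultaneously turns the measure $\frac{r^2+a^2}{\nar}\,dr$ into $dx$ and the derivative term into $\de_x$; it is therefore a unitary map of $L^2((r_+,\infty),\frac{r^2+a^2}{\nar}dr)^2$ onto $L^2((x_-,x_+),dx)^2$ carrying $h_{k,j}$ into $J\de_x+Q(x)$, with $J=\left(\begin{smallmatrix}0&1\\-1&0\end{smallmatrix}\right)$ and $Q$ real and symmetric, $Q=Q^{t}$. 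A short inspection of the endpoints shows $x\to-\infty$ as $r\to r_+$ (where $\nar$ has a simple zero, cf. (\ref{reparam-delta})), while $x\to x_+<\infty$ as $r\to\infty$ (since $\frac{r^2+a^2}{\nar}\sim\frac{l^2}{r^2}$ is integrable at infinity); thus $r=\infty$ becomes the ``boundary-like'' endpoint at finite distance in $x$.

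First I would dispatch the horizon. As $r\to r_+$ one has $\snar\to0$, so the two diagonal entries of $Q$ tend to the common finite limit $c:=\frac{a\Xi k+e q_e r_+}{r_+^2+a^2}$ and the off-diagonal entry $\frac{\snar}{r^2+a^2}\lambda_{k;j}\to0$; hence $Q(x)\to cI$ as $x\to-\infty$, exponentially fast in the non-extremal case and like $O(1/|x|)$ in the extremal one. For $\omega\in\mathbb{C}\setminus\mathbb{R}$ the constant-coefficient system $J\de_x y+cy=\omega y$ has characteristic exponents with real parts $\pm\mathrm{Im}\,\omega\neq0$, so exactly one of its solutions is square-integrable at $-\infty$; since $Q-cI$ is bounded there, this dichotomy is not destroyed and the limit point case persists \cite{weidmann}. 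Hence the horizon is always in the limit point case, irrespective of extremality.

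The decisive endpoint is $r=\infty$, i.e. $x\to x_+$. Setting $s:=x_+-x\to0^+$ one finds $r\sim l^2/s$, and the only coefficient of $Q$ that becomes singular is the diagonal one: from $\snar\sim r^2/l$ one gets $\pm\frac{\mu r\snar}{r^2+a^2}\sim\pm\frac{\mu r}{l}\sim\pm\frac{\mu l}{s}$, while the remaining entries stay bounded. Writing the eigenvalue equation $(J\de_x+Q)y=\omega y$ near $s=0$ as a first-kind singular system $s\,\frac{dy}{ds}=N(s)y$ with $N$ analytic at $s=0$ (cf. \cite{hsieh,walter}), the residue matrix is $N_0=\lim_{s\to0}N(s)=\mu l\,\sigma_1$, whose eigenvalues are $\pm\mu l$. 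Consequently there are two independent solutions behaving like $s^{\mu l}$ and $s^{-\mu l}$ near $s=0$ (with a possible $\log s$ factor in the second one when $2\mu l\in\mathbb{Z}$). Since the transported measure is $dx=ds$ and $\int_0 s^{2\rho}\,ds<\infty\iff\rho>-\tfrac12$, the solution $s^{\mu l}$ is always square-integrable, whereas $s^{-\mu l}$ (and $s^{-\mu l}\log s$) is square-integrable if and only if $\mu l<\tfrac12$. Therefore the limit circle case occurs at $r=\infty$ exactly when $\mu l<\tfrac12$, and the limit point case exactly when $\mu l\geq\tfrac12$ (the threshold $\mu l=\tfrac12$ lying in the limit point case, since neither $s^{-1/2}$ nor $s^{-1/2}\log s$ is square-integrable).

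Combining the two endpoint analyses through Weyl's alternative yields that $h_{k,j}$ is in the limit point case at both endpoints, and hence essentially selfadjoint on $C^\infty_0(r_+,\infty)^2$, if and only if $\mu l\geq\frac12$, which is the claim; note that $k$, $j$, $\lambda_{k;j}$, $e$, $q_e$ and the extremality only contribute bounded terms to $Q$ and therefore do not enter the condition. The main obstacle is precisely the endpoint $r=\infty$: one must verify that after the tortoise substitution it is a regular singular (first-kind) point at finite distance whose residue is \emph{exactly} $\mu l\,\sigma_1$, in particular that no other term of $Q$ contributes to the $1/s$ part and that the subleading analytic corrections do not shift the Frobenius exponents $\pm\mu l$, and then to settle the borderline $\mu l=\tfrac12$, where the logarithmic resonance could a priori affect square-integrability but in fact does not.
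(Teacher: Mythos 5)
Your proof is correct and follows essentially the same route as the paper: a tortoise coordinate sends the horizon to an infinite endpoint, where a Dirac system is automatically in the limit point case (the paper invokes the corollary to Theorem 6.8 of Weidmann directly, which is cleaner than your perturbation-of-constant-coefficients heuristic --- ``a bounded perturbation does not destroy the $L^2$-solution dichotomy'' is not a valid general principle, though your citation of Weidmann covers the conclusion), while $r=\infty$ becomes a singularity of the first kind with exponents $\pm\mu l$, giving the limit circle case exactly for $\mu l<\frac{1}{2}$. The paper uses two separate coordinates ($y$ for the horizon, $x=1/r$ at infinity) rather than your single global tortoise variable, but the residue matrix, the Frobenius exponents, the square-integrability threshold and the treatment of the borderline $\mu l=\frac{1}{2}$ all coincide with yours.
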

\begin{proof}
We choose the tortoise coordinate $y$ defined by
\eqn
dy = - \frac{r^2+a^2}{\Delta_r} dr
\label{ytortoise}
\feqn
and choose a free integration constant in such a way that $y\in (0,\infty)$. It holds
$y\to \infty \Leftrightarrow r\to {r_{+}}^+$. Then we get
\eqn
h_{k,j}=
\left(
\begin{array}{cc}
0 & -\partial_y\\
\partial_y & 0
\end{array}
\right)
+ V(r(y)),
\label{hamilton-tortoise}
\feqn
and the corollary to thm. 6.8 p.99 in \cite{weidmann} ensures that the limit point case holds for $h_{k,j}$ 
at $y=\infty$.\\
It is also useful to point out that it holds
\eqn
\lim_{y\to \infty} V(r(y))=
\left(
\begin{array}{cc}
\frac{1}{r_{+}^2+a^2} (a k  \Xi+ e q_e r_{+}) & 0\\
0 & \frac{1}{r_{+}^2+a^2} (a k  \Xi+ e q_e r_{+})
\end{array}
\right):=
\left(
\begin{array}{cc}
\varphi_+ & 0\\
0 & \varphi_+
\end{array}
\right).
\feqn
The only problem can be found at $r=\infty$. The differential equation $h_{k,j} X=\omega X$ amounts
to the following differential system:
\eqn
\partial_r X =
\left(
\begin{array}{cc}
\frac{\lambda_{k;j}}{\sqrt{\Delta_r}} & -\frac {\omega(r^2+a^2)}{\Delta_r} -\frac{\mu r}{\sqrt{\Delta_r}}+\frac{P(r)}{\Delta_r}\\
\frac {\omega(r^2+a^2)}{\Delta_r} -\frac{\mu r}{\sqrt{\Delta_r}}-\frac{P(r)}{\Delta_r} & -\frac{\lambda_{k;j}}{\sqrt{\Delta_r}}
\end{array}
\right)
X
\feqn
where $X(r):=\left(\begin{array}{c} X_1 (r)\\
X_2 (r)\end{array}\right)$ and $P(r)=a k \Xi+ e q_e r$. 
In order to study the behavior of this differential system at $r=\infty$
it is useful to introduce momentarily $x=\frac{1}{r}$. Then one obtains
\eqn
x \partial_x X = G(x) X,
\feqn
where the smooth matrix $G(x)$ is regular as $x \to 0^+$ and
\eqn
\lim_{x\to 0^+} G(x) =
\left(
\begin{array}{cc}
0 & \mu l\\
\mu l & 0
\end{array}
\right).
\feqn
A singularity of the first kind is found, with eigenvalues $w_{\pm}=\pm \mu l$. As in the previous subsection,
we can conclude that the limit point case occurs at $r=\infty$ iff
\eqn
\int_c^{\infty} \frac{dr}{r^2} r^{\pm 2 \mu l} = \infty.
\feqn
For $\mu >0$ as in the physical interesting case, the limit point case occurs for $\mu l \geq \frac{1}{2}$,
which is also the required essential selfadjointness condition for the reduced Hamiltonian.
\end{proof}
If $\mu l <\frac{1}{2}$, there is a 1-parameter family of selfadjoint extensions $\hat t_{{k,j}}$ of $\hat h_{k,j}$ \cite{weidmann}.

\section{The eigenvalue equation.}
\label{eigen-equation}

We limit our considerations 
to the case $\mu l \geq \frac{1}{2}$, for which we know that $\hat H$ is essentially selfadjoint on 
$\cDD =C_0^{\infty} ((r_+,\infty) \times S^2)^4\subset {\cHH}_{<>}$, and then there exists a 
unique selfadjoint extension  $\bar{\hat H}$ on ${\mathfrak D}$. See also the conclusions. 
The following relation holds between the eigenvalue equation for $\bar{\hat H}$ and the differential system one finds
by separating the variables as in the Chandrasekhar-like trick. 
We have that $\bar{\hat H} = \hat \Omega^{-2} \bar{\hat H}_0$.
{F}rom
\eqn
\bar{\hat H} \psi=\omega \psi
\feqn
one obtains (cf. (\ref{def-V})) 
\eqn
V\hat \Omega^{-2} V^{\ast} V \bar{\hat H}_0 V^{\ast} \chi = \omega \chi,
\feqn
where $\chi \in V{\mathfrak D}$. Defining the bounded invertible multiplication operator
$\hat D^{-2}:=V\hat \Omega^{-2} V^{\ast}$ and multiplying on the left by $\hat D^{2}$ both the members of the equation, 
one finds
\eqn
V \bar{\hat H}_0 V^{\ast} \chi = \hat D^{2} \omega \chi.
\feqn
Being $D^{2}={\mathbb {I}}_4 + {\mathcal T}$, where
\eqn
{\mathcal T} =
\left(
\begin{array}{cccc}
 0        &  0      &  0       &  i\alpha (r,\theta) \\
 0        &  0      & -i\alpha (r,\theta) &  0 \\
 0        & i\alpha (r,\theta) &  0       &  0 \\
-i\alpha (r,\theta)  &  0      &  0       &  0
\end{array}
\right)
\feqn
is associated with the multiplication operator $\hat {\mathcal T}$ which is 
bounded  and selfadjoint in ${\mathcal H}_{()}$, 
it follows
\eqn
(V \bar{\hat H}_0 V^{\ast} -\hat{\mathcal T}\omega) \chi = \omega \chi,
\label{pseudoeigenvalue}
\feqn
which is just in the form suitable for variable separation by means of the standard trick. One then obtains
the standard form for the separated equations, and formally (compare with equations (\ref{red-h0}), (\ref{red-angular}),  
(\ref{red-radial}) and (\ref{reduced-radial})) the original eigenvalue problem is transformed into 
the (pseudo-)eigenvalue problem (\ref{pseudoeigenvalue}), in which both the radial part and the angular part
are coupled because the angular momentum operator one obtains by variable separation depends on $\omega$:
\eqn
\bar{\hat {\mathbb {U}}}_{k\; \omega}:=\bar{\hat {\mathbb {U}}}_k+\hat{\mathbb {V}}_{\omega},
\feqn
where $\hat{\mathbb {V}}_{\omega}$ is a bounded selfadjoint operator in $L^2 ((0,\pi), \frac{\sin (\theta)}{\sqrt{\Delta_{\theta}}} d\theta)^2$ and is a multiplication operator by 
\eqn
{\mathbb {V}}_{\omega}=
\left(
\begin{array}{cc}
0 & -\frac{i \omega a \sin (\theta)}{\sqrt{\Delta_{\theta}}}\\
\frac{i \omega a \sin (\theta)}{\sqrt{\Delta_{\theta}}} & 0
\end{array}
\right) =: \omega\; {\mathbb V};
\label{v-omega}
\feqn 
then also the eigenvalues $\lambda_{k;j} (\omega)$ of $\bar{\hat {\mathbb {U}}}_{k\; \omega}$ depend on $\omega$. As a consequence, also the radial
eigenvalue equation depends on $\omega$ through its dependence on $\lambda_{k;j}$. See e.g.
\cite{winklmeierthesis,batic,yamada,schmid,baticschmid} for the Kerr-Newman case. The following system
of coupled eigenvalue equations have to be satisfied simultaneously in $L^2 ((0,\pi), \frac{\sin (\theta)}{\sqrt{\Delta_{\theta}}} d\theta)^2$ 
and in $L^2 ((r_{+},\infty), \frac{r^2+a^2}{\Delta_r} dr)^2$ respectively:
\eqn
\bar{\hat {\mathbb {U}}}_{k\; \omega} S = \lambda S,
\label{angularequation}
\feqn
and
\eqn
\bar{\hat h}_{k,j} X = \omega X.
\label{radialequation}
\feqn
We stress again that,
the Dirac equation (\ref{dirac}) in the Chandrasekhar-like variable separation ansatz (\ref{separation})
reduces to the couple of equations (\ref{angularequation}) and (\ref{radialequation}) and
is equivalent, due to the nature of the operator $\hat D^{2}$, 
to the Hamiltonian eigenvalue equation under the same ansatz.\\ 
In order to focus on the relation between the spectral analysis of the Hamiltonian 
$\bar{\hat H}$ and the spectra of the operators $\hat h_{k,j}$ which are obtained by variable separation of the 
pseudo-eigenvalue equation (\ref{pseudoeigenvalue}), we could also heuristically introduce the following trick. 
Let us consider the 1-parameter family of selfadjoint operators
\eqn
\{\hat H_0^{(z)}:= V \bar{\hat H}_0 V^{\ast} -z \hat{\mathcal T}\}_{z\in {\mathbb R}},
\label{family}
\feqn
to be defined on a dense domain $D(\hat H_0^{(z)})\subset {\mathcal H}_{()}$, which is easily understood 
to be independent from $z$. Indeed, 
$ z \hat{\mathcal T}$ is a bounded perturbation of  $V \bar{\hat H}_0 V^{\ast}$ 
and is infinitesimally small with respect to $V \bar{\hat H}_0 V^{\ast}$ \cite{RSII} and then, on the 
domain $V{\mathfrak D}=D(V \bar{\hat H}_0 V^{\ast})=:D(\hat H_0^{(z)})$,  
$\hat H_0^{(z)}$ is selfadjoint and defines an analytical family of type (A) according to Kato's definition
\cite{kato,RSIV}.\\
Each operator in this family 
admits an orthogonal decomposition 
(note that $\bar{\hat {\mathbb {U}}}_{k\; z}=\bar{\hat {\mathbb {U}}}_k+z \hat{\mathbb {V}}$; 
cf. (\ref{v-omega}); moreover, 
the eigenvalues $\lambda_{k;j} (z)$ of $\bar{\hat {\mathbb {U}}}_{k\; z}$ depend on z): 
\eqn
\hat H_0^{(z)}=\bigoplus_{k,j} \bar{\hat h}_{k,j}^{(z)}\otimes I_{k,j},
\feqn
and we get
\eqn
\sigma (\hat H_0^{(z)})=\overline{\bigcup_{k,j} \sigma (\bar{\hat h}_{k,j}^{(z)})},
\feqn
and in particular 
\eqn
\sigma_{p} (\hat H_0^{(z)})=\bigcup_{k,j} \sigma_{p} (\bar{\hat h}_{k,j}^{(z)})
\feqn
(see e.g. Lemma 7 in \cite{schmidt}).\\ 
In order to get a relation between the spectra of this 1-parameter family and the solutions of 
(\ref{pseudoeigenvalue}), we impose the constraint to include only those 
$\omega$ such that $\omega\in \sigma (\hat H_0^{(\omega)})$, which implement (\ref{pseudoeigenvalue}).\\ 
Note also that, for a non rotating black hole, $a=0$ implies that ${\mathcal T}=0$ and 
$\Omega^2=\mathbb {I}_4$. Of course, there is no need to introduce the above 1-parameter family of 
operators for the study of the spectrum, and $\bar{\hat H}=\bar{\hat H}_0$.

In the following we show that the spectrum of the angular momentum operator
$\bar{\hat {\mathbb {U}}}_{k\; \omega}$ is discrete for any $\omega\in \mathbb{R}$. 
Moreover, we show that in the non-extremal case
the radial Hamiltonian $\bar{\hat h}_{k,j}$ for any $\lambda_{k;j}$ has a 
spectrum is absolutely continuous and coincides with $\mathbb{R}$, and then in the latter case we infer that  
no eigenvalue of $\bar{\hat{H}}$ exists. 

\subsection{Spectrum of the operator $\bar{\hat {\mathbb {U}}}_{k\; \omega}$}
\label{spectrumUomega}

We consider the equation ${\mathbb {U}}_{k\; \omega} S-\lambda S=0$. As in \cite{weidmann,weidz}, 
we look for real solutions for real $\lambda$ and 
(cf. \cite{weidmann}, p. 242) we introduce an analogue to Pr\"ufer transformation in the case of Dirac system.
We implement the unitary transformations (\ref{unitW}) and (\ref{liouR}) 
and obtain $R W\bar{\hat{\mathbb U}}_{k\; \omega} W^{\ast} R^{\ast}$. 
Let us define the unitary matrix (cf. \cite{winklmeierthesis} for the Kerr-Newman case without magnetic charge)
\eqn
U:=\frac{1}{\sqrt{2}}
\left(
\begin{array}{cc}
 1 & 1\cr
-1 & 1
\end{array}
\right)
\feqn
and also let us define, thanks to the formal differential expression 
$R W {\mathbb U}_{k\; \omega} W^{\ast} R^{\ast}$ (see eq. (\ref{formalurwu}) below), the following couple of selfadjoint 
operators: 
\eqn
&&D({\mathcal U}_0)=\{ \Theta\in L^2 ((0,c), \frac{d\theta}{\sqrt{\Delta_{\theta}}})^2;
\Theta\; \hbox{is locally absolutely continuous}; B (\Theta) =0;
{\mathcal U}_0 \Theta\in L^2 ((0,c), \frac{d\theta}{\sqrt{\Delta_{\theta}}})^2 \}\cr
&&{\mathcal U}_0 \Theta = R W {\mathbb U}_{k\; \omega} W^{\ast} R^{\ast}\; \Theta, 
\quad \Theta \in D({\mathcal U}_0);\cr
&&D({\mathcal U}_{\pi})=\{ \Theta\in L^2 ((c,\pi), \frac{d\theta}{\sqrt{\Delta_{\theta}}})^2;
\Theta\; \hbox{ is locally absolutely continuous}; B (\Theta) =0;
{\mathcal U}_{\pi} \Theta\in L^2 ((c,\pi), \frac{d\theta}{\sqrt{\Delta_{\theta}}})^2 \},\cr
&&{\mathcal U}_{\pi} \Theta = R W {\mathbb U}_{k\; \omega} W^{\ast} R^{\ast}\; \Theta, 
\quad \Theta \in D({\mathcal U}_{\pi}).
\feqn
$0<c<\pi$ is an arbitrary (regular) point at which the boundary condition
$B (\Theta) = \sin (\beta) \Theta_1 (c) + \cos (\beta) \Theta_2 (c)=0$, with $\beta\in [0,\pi)$ and 
with $\Theta(\theta):=\left(\begin{array}{c} \Theta_1 (\theta)\\
\Theta_2 (\theta)\end{array}\right)$, 
is imposed.\\
One has
\eqn
U  R W {\mathbb U}_{k\; \omega} W^{\ast} R^{\ast}U^{\ast} =
\left(
\begin{array}{cc}
 0 & \sqrt{\Delta_{\theta}} \partial_{\theta}\cr
- \sqrt{\Delta_{\theta}} \partial_{\theta} & 0
\end{array}
\right)+
M(\theta),
\feqn
with
\eqn
M(\theta)=
\left(
\begin{array}{cc}
\frac{\Xi \sigma (\theta)}{\sqrt{\Delta_{\theta}}} \frac{1}{\sin (\theta)}+
\frac{a \omega \sin (\theta)}{\sqrt{\Delta_{\theta}}}
& -\mu a \cos (\theta)\cr
-\mu a \cos (\theta) & -\frac{\Xi \sigma (\theta)}{\sqrt{\Delta_{\theta}}} \frac{1}{\sin (\theta)}+
\frac{a \omega \sin (\theta)}{\sqrt{\Delta_{\theta}}}
\end{array}
\right),
\label{formalurwu}
\feqn
where (being $d :=\frac{q_m e}{\Xi}\in {\mathbb Z}$) 
\eqn
\sigma (\theta) := d  \cos (\theta) -(n+\frac{1}{2}), \quad d,n\in {\mathbb Z}.
\feqn
We can rewrite $U  R W {\mathbb U}_{k\; \omega} W^{\ast} R^{\ast} U^{\ast}$ in the following form:
\eqn
{\mathfrak R}^{-1}(\theta) \left[\left(\begin{array}{cc}
 0 & \partial_{\theta}\cr
- \partial_{\theta} & 0
\end{array}
\right)+{\mathfrak R}(\theta)
\left(
\begin{array}{cc}
\frac{\Xi \sigma (\theta)}{\sqrt{\Delta_{\theta}}} \frac{1}{\sin (\theta)}+
\frac{a \omega \sin (\theta)}{\sqrt{\Delta_{\theta}}}
& -\mu a \cos (\theta)\cr
-\mu a \cos (\theta) & -\frac{\Xi \sigma (\theta)}{\sqrt{\Delta_{\theta}}} \frac{1}{\sin (\theta)}+
\frac{a \omega \sin (\theta)}{\sqrt{\Delta_{\theta}}}
\end{array}
\right) \right],
\feqn
where
\eqn
{\mathfrak R}^{-1}(\theta) =\left(\begin{array}{cc}
 \sqrt{\Delta_{\theta}} & 0\cr
0 & \sqrt{\Delta_{\theta}}
\end{array}
\right).
\feqn
As in \cite{weidmann,weidz}, we can define
\eqn
G(\theta,\lambda)=\lambda {\mathfrak R}(\theta)-M(\theta),
\feqn
and
\eqn
\Theta(\theta) = \bar{\rho} (\theta)
\left(
\begin{array}{c}
\cos \eta (\theta) \\
\sin \eta (\theta)
\end{array}
\right)
\feqn
where
\eqn
\bar{\rho} (\theta) = \sqrt{\Theta^2_1 (\theta)+\Theta^2_2 (\theta)}
\feqn
and
\eqn
\eta (\theta)=\Bigg\{
\begin{array}{c}
\arctan \frac{\Theta_2 (\theta)}{\Theta_1 (\theta)}\quad \hbox{for}\; \Theta_1 (\theta)\not = 0 \\
{\mathrm{arccot}}\; \frac{\Theta_1 (\theta)}{\Theta_2 (\theta)}\quad \hbox{for}\; \Theta_2 (\theta)\not = 0
\end{array}
\feqn
are defined for real solutions of the eigenvalue equation and are absolutely continuous \cite{weidz}.
Then following \cite{weidmann,weidz} one obtains the following differential equation for $\eta (\theta)$:
\eqn
\frac{d}{d\theta} \eta (\theta,\lambda)=H(\theta,\eta(\theta),\lambda),
\label{derieta}
\feqn
where
\eqn
H(\theta,\eta(\theta),\lambda):=
\left(G(\theta,\lambda) \left(
\begin{array}{c}
\cos \eta (\theta) \\
\sin \eta (\theta)
\end{array}
\right) \bigg| \left(
\begin{array}{c}
\cos \eta (\theta) \\
\sin \eta (\theta)
\end{array}
\right)\right),
\feqn
and with $(\cdot|\cdot)$ here we mean the usual Euclidean product in $\mathbb{C}^2$.
One then finds
\eqn
H(\theta,\eta(\theta),\lambda)&=&
\frac{\lambda}{\sqrt{\Delta_{\theta}}}+(2 a \mu \cos (\theta)) \sin (\eta (\theta)) \cos (\eta (\theta))\cr
&+&\left[ \frac{\Xi \sigma (\theta)}{\sqrt{\Delta_{\theta}}} \frac{1}{\sin (\theta)} + \frac{a \omega \sin (\theta)}{\sqrt{\Delta_{\theta}}}
\right](\sin^2 (\eta (\theta))-\cos^2 (\eta (\theta))). 
\feqn
Note that the function $H(\theta,t,\lambda)$ is smooth for any $(\theta,t,\lambda)\in (0,\pi)\times
{\mathbb R}\times {\mathbb R}$. 
We consider first the case of the operator ${\mathcal U}_{\pi}$.
Let us define as in \cite{weidmann}
\eqn
n_+ (\lambda_1,\lambda_2) &=& \liminf_{\theta \to \pi^-} \frac{1}{\pi} \left(
\eta (\theta,\lambda_2) - \eta (\theta,\lambda_1)\right)\cr
n_- (\lambda_1,\lambda_2) &=& \limsup_{\theta \to \pi^-} \frac{1}{\pi} \left(
\eta (\theta,\lambda_2) - \eta (\theta,\lambda_1)\right)\cr
M (\lambda_1,\lambda_2) &=& \dim \left(
E (\lambda_2) - E (\lambda_1) \right),
\feqn
where $E (\lambda)$ is the spectral resolution of 
${\mathcal U}_{\pi}$. It holds $n_- (\lambda_1,\lambda_2)-2 \leq
M (\lambda_1,\lambda_2) \leq n_+ (\lambda_1,\lambda_2) + 2$. In particular, $\lambda$
belongs to the essential spectrum iff for every $\epsilon>0$ it holds
\eqn
n_+ (\lambda - \epsilon,\lambda + \epsilon)=\infty
\feqn
(cf. \cite{weidmann}, p. 248). We now prove the following result.

\begin{theorem}
The essential spectrum of $\bar{\hat {\mathbb U}}_{k\; \omega}$ is empty.
\label{essUomega}
\end{theorem}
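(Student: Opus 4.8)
The plan is to run the oscillation-theoretic argument already prepared via the Pr\"ufer-type angle $\eta(\theta,\lambda)$ of (\ref{derieta}) and the counting functions $n_{\pm}$. First I would invoke the decomposition principle \cite{weidmann}: since the essential spectrum is governed only by the behaviour of the coefficients near the singular endpoints, splitting $(0,\pi)$ at the regular point $c$ and imposing the separated condition $B(\Theta)=0$ yields the self-adjoint operators ${\mathcal U}_0$ and ${\mathcal U}_{\pi}$ with $\sigma_{ess}(\bar{\hat {\mathbb U}}_{k\; \omega})=\sigma_{ess}({\mathcal U}_0)\cup \sigma_{ess}({\mathcal U}_{\pi})$. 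Hence it suffices to show that each of these is empty; I would carry out the analysis for ${\mathcal U}_{\pi}$ in detail, the endpoint $\theta=0$ being identical after replacing $\rho_0=k+\frac{q_m e}{\Xi}$ by $\nu=k-\frac{q_m e}{\Xi}$.

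By the criterion recalled just before the statement (cf. \cite{weidmann}, p. 248), $\lambda\in \sigma_{ess}({\mathcal U}_{\pi})$ iff $n_+(\lambda-\epsilon,\lambda+\epsilon)=\infty$ for every $\epsilon>0$. The whole assertion therefore reduces to proving $n_+(\lambda_1,\lambda_2)<\infty$ for all $\lambda_1<\lambda_2$, i.e. that the angle difference $\eta(\theta,\lambda_2)-\eta(\theta,\lambda_1)$ stays bounded as $\theta\to \pi^-$. The structural fact I would exploit is that in $H(\theta,\eta,\lambda)$ the spectral parameter enters only through the bounded term $\lambda/\snat$, whereas the genuinely singular coefficient $\frac{\Xi\sigma(\theta)}{\snat\,\sin(\theta)}$ is independent of $\lambda$ and behaves like $\tilde C/(\pi-\theta)$ near $\theta=\pi$, with $\tilde C=\Xi\rho_0/\sqrt{\nat(\pi)}$.

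Setting $\alpha=\pi-\theta$, the leading balance of (\ref{derieta}) is $\frac{d\eta}{d\alpha}\simeq -\tilde C\,\cos(2\eta)/\alpha$, whose equilibria $\cos(2\eta)=0$ are alternately attracting and repelling. Since $k\in {\mathbb Z}+\frac12$ and $\frac{q_m e}{\Xi}\in {\mathbb Z}$, the exponent $\rho_0$ is a nonzero half-integer, so $\tilde C\neq 0$ and exactly one equilibrium is genuinely attracting; consequently $\eta(\theta,\lambda)$ converges to a finite limit $\eta_{\infty}$ as $\theta\to \pi^-$ for every real $\lambda$. Differentiating (\ref{derieta}) in $\lambda$, the function $\partial_{\lambda}\eta$ solves a linear equation with forcing $1/\snat$ and homogeneous factor $\exp(\int_s^{\theta}\partial_{\eta}H\,d\tau)$; at the attracting equilibrium $\sin(2\eta_{\infty})$ has precisely the sign for which $\partial_{\eta}H\sim -2\tilde C/(\pi-\tau)$ drives $\int_s^{\theta}\partial_{\eta}H\,d\tau\to -\infty$, so this factor decays like a positive power of $(\pi-\theta)$. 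Hence $\partial_{\lambda}\eta$ is bounded (indeed convergent), and $\eta(\theta,\lambda_2)-\eta(\theta,\lambda_1)=\int_{\lambda_1}^{\lambda_2}\partial_{\lambda}\eta\,d\lambda$ remains bounded. This gives $n_+(\lambda_1,\lambda_2)<\infty$ and thus $\sigma_{ess}({\mathcal U}_{\pi})=\emptyset$; the identical analysis at $\theta=0$ (now with $\nu\neq 0$) gives $\sigma_{ess}({\mathcal U}_0)=\emptyset$, whence the essential spectrum of $\bar{\hat {\mathbb U}}_{k\; \omega}$ is empty.

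I expect the main obstacle to be the uniform control of the $\lambda$-variation up to the endpoint: one must verify rigorously that the genuine solution is trapped by the attracting equilibrium, so that $\sin(2\eta)$ keeps the sign making $\int \partial_{\eta}H\to -\infty$, and that the subleading bounded terms of $H$ (those proportional to $a\mu\cos(\theta)$ and to $a\omega\sin(\theta)/\snat$) perturb the exponent only by an $O(1)$ amount and cannot restore oscillation. Quantifying the decay of $\exp(\int \partial_{\eta}H)$ against the blow-up of the singular coefficient is precisely where the half-integrality of $\rho_0$ and $\nu$ is indispensable, and where the estimate must be done with care.
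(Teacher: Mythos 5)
Your overall framework --- splitting $(0,\pi)$ at the regular point $c$ into ${\mathcal U}_0$ and ${\mathcal U}_{\pi}$, the Pr\"ufer angle $\eta(\theta,\lambda)$ of (\ref{derieta}), and the criterion that $\lambda\in\sigma_{e}$ iff $n_+(\lambda-\epsilon,\lambda+\epsilon)=\infty$ for every $\epsilon>0$ --- is exactly the paper's. Where you diverge is in how finiteness of $n_+$ is established, and the step you defer as ``the main obstacle'' is in fact the entire content of the proof. The paper never studies the $\lambda$-dependence of $\eta$ at all: it shows that for each fixed real $\lambda$ the angle $\eta(\theta,\lambda)$ is itself bounded as $\theta\to\pi^-$, which trivially bounds the difference $\eta(\theta,\lambda_2)-\eta(\theta,\lambda_1)$ and hence $n_{\pm}$. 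That boundedness is obtained by an elementary sign argument which is the rigorous version of the ``trapping'' you describe: once $\theta$ is so close to $\pi$ that the singular coefficient dominates $|\lambda/\sqrt{\Xi}|+2|a\mu|$ (inequality (\ref{diverg})), the right-hand side $H(\theta,\eta,\lambda)$ is strictly negative at every level $\eta=p\pi$ and strictly positive at every level $\eta=(p+\frac{1}{2})\pi$ (for $\sigma(\pi)>0$; signs reversed otherwise), so $\eta$ cannot cross either family of levels in the direction required for unboundedness. This one-line estimate is precisely the verification you postpone, and it simultaneously disposes of your worry about the subleading $a\mu\cos(\theta)$ and $a\omega\sin(\theta)/\snat$ terms, which are simply absorbed into the constant on the right of (\ref{diverg}). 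Given this, your second stage --- convergence of $\eta$ to an attracting equilibrium, the variation-of-constants formula for $\partial_{\lambda}\eta$, and the decay of $\exp(\int\partial_{\eta}H)$ --- is superfluous: boundedness of each $\eta(\cdot,\lambda)$ already yields $n_+(\lambda_1,\lambda_2)<\infty$, and no control of the $\lambda$-derivative is needed. One smaller point: the half-integrality of $\rho_0$ and $\nu$ is not ``indispensable'' in the way you suggest; it guarantees $\sigma(\pi)\neq 0$ and $\sigma(0)\neq 0$ so that the coefficient genuinely blows up like $(\pi-\theta)^{-1}$, but if it vanished the coefficient would be bounded near the endpoint and $\eta$ would be bounded for the trivial reason that $H$ is. So your plan is viable and closable, but as written it establishes the theorem only modulo the one estimate that actually carries the result, and the paper's route to that estimate is considerably more economical than the linearization-plus-$\lambda$-differentiation scheme you outline.
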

\begin{proof}
We start by considering ${\mathcal U}_{\pi}$.
Our aim is to show that the function $\eta (\theta,\lambda)$ is a bounded
function for any finite $\lambda$. As a consequence, the essential spectrum of ${\mathcal U}_{\pi}$
is empty. With this aim, we assume that $\eta$ is unbounded. We show that this assumption leads to
a contradiction both if $\eta$ has no upper bound as $\theta\to \pi^-$ and if
$\eta$ has no lower bound as $\theta\to \pi^-$.\\
We first note that the function
\eqn
\zeta (\theta):=\frac{\sigma (\theta)}{\sqrt{\Delta_{\theta}}} \frac{1}{\sin (\theta)}
\feqn
in a suitable left neighborhood of $\pi$ is either increasing or decreasing according to the sign of
$\sigma (\theta)$ at $\pi$ (its derivative leading term is
$-\cos (\theta) \frac{\sigma (\theta)}{\sqrt{\Delta_{\theta}}} \frac{1}{\sin^2 (\theta)}$
in such a neighborhood).
Let us assume that $\sigma (\pi)>0$. Then, for $\theta_0 < \theta<\pi$ the function $\sigma (\theta)$ is
positive and for $\theta_1<\theta<\pi$, where $\theta_0\leq \theta_1$, the function $\zeta$ is increasing
without upper bound. Then we can choose a $\theta_2>\theta_1$ such that for any $\theta_2<\theta<\pi$ it holds
\eqn
\frac{\Xi \sigma (\theta)}{\sqrt{\Delta_{\theta}}} \frac{1}{\sin (\theta)} + \frac{a \omega \sin (\theta)}{\sqrt{\Delta_{\theta}}}> |\frac{\lambda}{\sqrt{\Xi}}|+2 |a \mu|.
\label{diverg}
\feqn
If $\eta$ has no upper bound,
then there exists a $\theta_3>\theta_2$ such that
\eqn
\eta (\theta_3,\lambda) = p \pi \quad \hbox{for}\ p\in {\mathbb Z}
\feqn
and
\eqn
\eta (\theta,\lambda)>\eta (\theta_3,\lambda)
\feqn
for $\theta_3<\theta<\theta_4<\pi$, i.e. in a suitable right neighborhood of $\theta_3$ the function 
$\eta (\theta,\lambda)$ has to be increasing; if $\eta$ has no lower bound,
then there exists a $\theta_5>\theta_2$ such that
\eqn
\eta (\theta_5) = (p+\frac{1}{2}) \pi \quad \hbox{for}\ p\in {\mathbb Z}
\feqn
and
\eqn
\eta (\theta,\lambda)<\eta (\theta_5,\lambda)
\feqn
for $\theta_5<\theta<\theta_6<\pi$.
In both cases a contradiction is achieved, indeed the function $H(\theta,\eta (\theta),\lambda)$
is negative in a suitable neighborhood of $\theta_3$ in the former case because of (\ref{diverg}):
\eqn
H(\theta_3,\eta (\theta_3),\lambda)=\frac{\lambda}{\sqrt{\Delta_{\theta_3}}}-
\left[\Xi \frac{\sigma(\theta_3)}{\sqrt{\Delta_{\theta_3}}}\frac{1}{\sin (\theta_3)}+
\frac{a \omega \sin (\theta_3)}{\sqrt{\Delta_{\theta_3}}} \right]<0.
\feqn
Then $\eta$ cannot be unbounded from above (cf. (\ref{derieta})).\\
In the latter case one finds
$H(\theta_5,\eta (\theta_5),\lambda)>0$ and then $\eta$ cannot be unbounded from below.
In the case $\sigma (\pi)<0$, one gets the same contradiction
in an analogous way.\\
Analogously, one can conclude that $\sigma_e ({\mathcal U}_0)=\emptyset$. The decomposition method \cite{weidmann} ensures that
$\sigma_e (U R W\bar{\hat {\mathbb U}}_{k\; \omega} W^{\ast} R^{\ast} U^{\ast})=\sigma_e ({\mathcal U}_0)\cup \sigma_e ({\mathcal U}_{\pi})$,
and then we can conclude that the spectrum of $\bar{\mathbb U}_{k\; \omega}$ is discrete. 
On the grounds of theorem 10.8 in
\cite{weidmann}, we can also conclude that the spectrum is simple.
\end{proof}
See also Appendix \ref{spectrumU} for an alternative proof. Note that for $\omega=0$ Theorem 
\ref{essUomega} implies 
that the spectrum of $\bar{\mathbb U}_k$ is discrete.

\subsection{Spectrum of the operator $\bar{\hat h}_{k,j}$}

In order to study the spectral properties of $\bar{\hat h}_{k,j}$, we introduce, as in the previous subsection,
two auxiliary selfadjoint operators $\hat h_{hor}$ and $\hat h_{\infty}$: 
\eqn
&&D(\hat h_{hor})=\{ X\in L^2_{(r_{+},r_0)},\; X
\hbox{ is locally absolutely continuous}; B(X)=0;\; 
\hat h_{hor} X \in L^2_{(r_{+},r_0)}\},\cr
&& \hat h_{hor} X = h_{k,j} X;\cr
&&D(\hat h_{\infty})\hphantom{o}=\{ X\in L^2_{(r_0,\infty)},\; X
\hbox{ is locally absolutely continuous};  B(X)=0;\; 
\hat h_{\infty} X \in L^2_{(r_0,\infty)}\}\cr
&& \hat h_{\infty} X = h_{k,j} X.
\feqn
$r_0$ is an arbitrary point with $r_{+}<r_0<\infty$, at which the boundary condition $B(X):=
\sin (\beta) X_1 (r_0)+\cos (\beta) X_2 (r_0) =0$, with $X(r):=\left(\begin{array}{c} X_1 (r)\\
X_2 (r)\end{array}\right)$ and with $\beta\in [0,\pi)$ is imposed. We also have defined $L^2_{(r_{+},r_0)}:= 
L^2 ((r_{+},r_0), \frac{r^2+a^2}{\Delta_r} dr)^2$ and 
$L^2_{(r_0,\infty)}:=L^2 ((r_0,\infty), \frac{r^2+a^2}{\Delta_r} dr)^2$. 
Note that we omit the indices $k,j$ for these operators.\\ 
We first show that $\hat h_{\infty}$ has discrete spectrum and that in the
non-extremal case $\hat h_{hor}$ has absolutely continuous spectrum, and then we deduce 
qualitative spectral properties for $\bar{\hat h}_{k,j}$. 

On the grounds of  the analysis in \cite{hs-ac,weidz-ac}, we can conclude that the spectrum of $\bar{\hat h}_{k,j}$
is absolutely continuous in the non-extremal case. As to the extremal manifold, we limit
ourselves to point out that $\sigma_e (\bar{\hat h}_{\infty})=\emptyset$ holds true, too; a 
weaker conclusion can instead be stated about the spectrum of $\hat h_{hor}$: it is
absolutely continuous in ${\mathbb{R}}-{\varphi_+}$.\\

We introduce the tortoise coordinate
\eqn
\frac{dx}{dr}=\frac{r^2+a^2}{\Delta_r}
\label{tortoisex}
\feqn
and choose the integration constant in such a way that $r\in (r_+,\infty)$ iff $x\in (-\infty,0)$.
We also point out that, for $r\to \infty$,
i.e. for $x\to 0^-$ one finds $r\sim -\frac{l^2}{x}$. We get
\eqn
\hat h_{k,j}=
\left(
\begin{array}{cc}
0 & \partial_x\\
-\partial_x & 0
\end{array}
\right)
+ V(r(x)).
\feqn
We consider $\hat {h}_{\infty}=\hat h_{k,j}|_{[x(r_0),0)}$.
We introduce the Pr\"ufer-like transformation as in the case of the angular momentum
operator
\eqn
\bar{\rho} (x) = \sqrt{X^2_1 (x)+X^2_2 (x)}
\feqn
and
\eqn
\eta (x)=\Bigg\{
\begin{array}{c}
\arctan \frac{X_2 (x)}{X_1 (x)}\quad \hbox{for}\; X_1 (x)\not = 0\\
{\mathrm{arccot}}\; \frac{X_1 (x)}{X_2 (x)}\quad \hbox{for}\; X_2 (x)\not = 0.
\end{array}
\feqn
We can also define
\eqn
G(x,\omega):=\omega {\mathbb I}-V(r(x)),
\feqn
and obtain the differential equation
\eqn
\frac{d}{dx} \eta (x,\omega)=H(x,\eta(x),\omega),
\label{derietx}
\feqn
where
\eqn
H(x,\eta(x),\omega):=
\left( G(x,\omega) \left(
\begin{array}{c}
\cos \eta (x) \\
\sin \eta (x)
\end{array}
\right) \bigg| \left(
\begin{array}{c}
\cos \eta (x) \\
\sin \eta (x)
\end{array}
\right)\right).
\feqn
One then finds (the dependence of $r$ on $x$ is left implicit)
\eqn
H(x,\eta(x),\omega)&=&
\omega -\frac{a\Xi k+e q_e r}{r^2+a^2}
-\frac{2 \lambda_{k;j} \sqrt{\Delta_r}}{r^2+a^2} \sin (\eta (x,\omega)) \cos (\eta (x,\omega))\cr
&+&\frac{\mu r \sqrt{\Delta_r}}{r^2+a^2} \left( \sin^2 (\eta (x,\omega))- \cos^2 (\eta (x,\omega) \right).
\feqn
Note that the function $H(x,s,\omega)$ is smooth for $(x,s,\omega)\in (-\infty,0)\times
{\mathbb R}\times {\mathbb R}$.

\begin{lemma}
$\sigma_e (\hat h_{\infty})=\emptyset$.
\end{lemma}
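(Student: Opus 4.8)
The plan is to mimic the strategy used for the angular operator in the proof of Theorem \ref{essUomega}: I will show that the Pr\"ufer angle $\eta(x,\omega)$ introduced above stays bounded as $x\to 0^-$ (equivalently $r\to\infty$) for every finite $\omega$. Once this is established, the difference $\eta(x,\omega_2)-\eta(x,\omega_1)$ is bounded as $x\to 0^-$, so the rotation numbers $n_{\pm}(\omega_1,\omega_2)$ built out of $\eta$ are finite; by the same oscillation-theoretic criterion invoked for ${\mathcal U}_{\pi}$ (cf. \cite{weidmann}, p.248), no point of $\mathbb{R}$ can then belong to $\sigma_e(\hat h_{\infty})$, which is exactly the assertion.

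First I would analyze the asymptotic size of the coefficients appearing in $H(x,\eta(x),\omega)$ as $r\to\infty$. Since $\nar\sim r^4/l^2$ one has $\snar\sim r^2/l$, whence $\frac{\mu r\snar}{r^2+a^2}\sim \frac{\mu r}{l}\to +\infty$, while $\frac{2\lambda_{k;j}\snar}{r^2+a^2}\to \frac{2\lambda_{k;j}}{l}$ stays bounded, the potential piece $\frac{a\Xi k+e q_e r}{r^2+a^2}\to 0$, and $\omega$ is constant. Hence the only genuinely divergent contribution to $H$ is the term $\frac{\mu r\snar}{r^2+a^2}(\sin^2\eta-\cos^2\eta)$, whose coefficient is strictly positive for $r>r_+$ because $\mu>0$ (recall $\mu l\geq \frac12$). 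This is the radial analogue of the divergent coefficient $\frac{\Xi\sigma(\theta)}{\snat}\frac{1}{\sin\theta}$ driving the contradiction in Theorem \ref{essUomega}.

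Then I would run the contradiction argument at the two relevant families of critical angles. At $\eta=p\pi$ one has $\sin\eta\cos\eta=0$ and $\sin^2\eta-\cos^2\eta=-1$, so the $\lambda_{k;j}$ term drops out and $H(x,p\pi,\omega)=\omega-\frac{a\Xi k+e q_e r}{r^2+a^2}-\frac{\mu r\snar}{r^2+a^2}$; choosing $x_0\in(-\infty,0)$ close enough to $0$ (i.e.\ $r$ large enough) that $\frac{\mu r\snar}{r^2+a^2}>|\omega|+\left|\frac{a\Xi k+e q_e r}{r^2+a^2}\right|$ on $(x_0,0)$, this forces $H<0$ there. If $\eta$ had no upper bound as $x\to 0^-$, it would cross some level $p\pi$ with $\frac{d\eta}{dx}>0$, contradicting $H<0$; so $\eta$ is bounded above. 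Symmetrically, at $\eta=(p+\frac12)\pi$ one has $\sin^2\eta-\cos^2\eta=+1$, giving $H>0$ on the same neighbourhood, and an $\eta$ unbounded from below would cross such a level with $\frac{d\eta}{dx}<0$, again a contradiction; so $\eta$ is bounded below.

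Combining the two bounds shows $\eta(\cdot,\omega)$ is bounded near $x=0^-$ for every finite $\omega$, and the conclusion $\sigma_e(\hat h_{\infty})=\emptyset$ follows as described. I expect the only delicate point to be the uniformity of the estimate near the critical angles: one must ensure that the dominant positive coefficient $\frac{\mu r\snar}{r^2+a^2}$ controls the remaining terms simultaneously for all integers $p$ on one fixed left neighbourhood of $x=0$. This is handled exactly as in (\ref{diverg}), and is in fact cleaner here than in the angular case, since the divergent coefficient keeps a fixed positive sign and no case distinction on the sign of $\sigma(\pi)$ is needed.
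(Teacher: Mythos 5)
Your proposal is correct and follows essentially the same route as the paper's own proof: the same Pr\"ufer-angle boundedness argument driven by the divergence of $\mu r\sqrt{\Delta_r}/(r^2+a^2)$ as $r\to\infty$, the same uniform inequality dominating the $\omega$, $P(r)$ and $\lambda_{k;j}$ terms, and the same sign contradiction at the levels $\eta=p\pi$ and $\eta=(p+\tfrac12)\pi$. Your added justification of the final step via the rotation-number criterion of \cite{weidmann} is exactly what the paper leaves implicit, so there is nothing to correct.
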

\begin{proof}
The leading term in the potential is
proportional to the mass $\mu$ and is monotonically increasing in a suitable left neighborhood $x_0<x<0$ of
$x=0$. We can also find an $x_1\in [x_0,0)$ such that for $x_1<x<0$ one gets
\eqn
\mu \frac{r\sqrt{\Delta_r}}{r^2+a^2}>|\omega|+\frac{|a\Xi k+e q_e r|}{r^2+a^2}+2 |\lambda_{k;j}|
\frac{\sqrt{\Delta_r}}{r^2+a^2}.
\label{bound-x}
\feqn
If $\eta$ is not bounded from above, 
we can find an $x_2\in (x_1,0)$ such that $\eta (x_2,\omega)=p \pi$,
with $p\in {\mathbb Z}$, and $\eta (x,\omega)>\eta (x_2,\omega)$ for $x_2<x<x_3<0$. If $\eta$ is not bounded from
below, we can find an $x_4\in (x_1,0)$ such that $\eta (x_4,\omega)=(q+\frac{1}{2}) \pi$,
with $q\in {\mathbb Z}$, and $\eta (x,\omega)<\eta (x_4,\omega)$ for $x_4<x<x_5<0$. But due to (\ref{bound-x}) 
$H(x,\eta(x),\omega)$ would be negative in a neighbourhood of $x_2$ and it would be positive
in a neighbourhood of $x_4$, against the assumption of an unbounded $\eta$. As a consequence, $\eta$ has to
be bounded, and then  the essential spectrum of $\hat h_{\infty}$ is empty.
\end{proof}
See also Appendix \ref{spectrumhinf} for an alternative proof. Note that, in the case $\mu l < \frac{1}{2}$, 
any selfadjoint extension of $\hat h_{\infty}$ obtained by imposing separated boundary conditions 
at $r_0$ and at $r=\infty$ still has discrete spectrum \cite{weidmann}. It is also remarkable that 
$\sigma_e (\hat h_{\infty})=\emptyset$ is not verified in the standard Kerr-Newman case. 

As to the spectral properties of $\hat h_{hor}$, a suitable change of coordinates consists in
introducing a tortoise-like coordinate defined by eqn. (\ref{ytortoise}).
It is then easy to show that the following result holds.
\begin{lemma}
$\sigma_{ac} (\hat h_{hor}) = {\mathbb R}$ in the non-extremal case.
\end{lemma}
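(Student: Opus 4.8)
The plan is to pass to the tortoise coordinate $y$ of (\ref{ytortoise}), in which $\hat h_{hor}$ becomes a one-dimensional Dirac operator on a half-line whose potential tends, at an exponential rate, to a scalar constant; the absolutely continuous spectrum is then inherited from the constant-coefficient limit operator, whose spectrum fills $\mathbb{R}$.

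First I would rewrite $\hat h_{hor}$. Exactly as in (\ref{hamilton-tortoise}), in the coordinate $y$ one has
\[
h_{k,j}=\begin{pmatrix}0 & -\partial_y\\ \partial_y & 0\end{pmatrix}+V(r(y)),
\]
restricted now to $y>y(r_0)$, with the horizon $r\to r_+^+$ sitting at the singular endpoint $y\to\infty$. In the non-extremal case $\Delta_r$ has a \emph{simple} zero at $r_+$: by (\ref{reparam-delta}) the factor $(r-r_+)$ is simple while the remaining polynomial is strictly positive at $r=r_+$ (since $r_+>r_-$), so $\Delta_r\sim\Delta_r'(r_+)(r-r_+)$ with $\Delta_r'(r_+)>0$. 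Hence near $r_+$ one has $dy\sim-\frac{r_+^2+a^2}{\Delta_r'(r_+)}\,\frac{dr}{r-r_+}$, so that $r-r_+\sim C\,e^{-\kappa y}$ for a constant $\kappa>0$ fixed by the surface gravity, and consequently $\sqrt{\Delta_r}=O(e^{-\kappa y/2})$.

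Next I would estimate $V(r(y))-\varphi_+\mathbb{I}$, where $\varphi_+\mathbb{I}=\lim_{y\to\infty}V(r(y))$ is the scalar matrix computed above. The off-diagonal entries $\frac{\sqrt{\Delta_r}}{r^2+a^2}\lambda_{k;j}$ and the mass entries $\pm\frac{\mu r\sqrt{\Delta_r}}{r^2+a^2}$ are $O(e^{-\kappa y/2})$, while the smooth diagonal remainder $\frac{a\Xi k+eq_e r}{r^2+a^2}-\varphi_+$ is $O(r-r_+)=O(e^{-\kappa y})$. Thus $V(r(y))-\varphi_+\mathbb{I}$ decays exponentially and in particular lies in $L^1(y(r_0),\infty)$. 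Invoking the asymptotic analysis of Dirac systems with an integrable perturbation of a constant-coefficient limit (\cite{hs-ac,weidz-ac}), the singular endpoint is in the limit point case and the absolutely continuous spectrum of $\hat h_{hor}$ coincides with that of the limit operator $\begin{pmatrix}0 & -\partial_y\\ \partial_y & 0\end{pmatrix}+\varphi_+\mathbb{I}$ on the half-line.

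Finally I would identify the limit spectrum. The eigenvalue equation for the constant-coefficient operator reduces to $X_1''+(\omega-\varphi_+)^2 X_1=0$, which admits bounded oscillatory solutions for every $\omega\in\mathbb{R}$; equivalently, the free operator $\begin{pmatrix}0 & -\partial_y\\ \partial_y & 0\end{pmatrix}$ is, after Fourier transform, multiplication by $\xi\sigma_2$, whose eigenvalues $\pm\xi$ exhaust $\mathbb{R}$. Its spectrum is therefore purely absolutely continuous and equal to $\mathbb{R}$, the scalar shift by $\varphi_+\mathbb{I}$ leaves this unchanged, and the regular boundary condition imposed at $r_0$ affects neither the essential nor the absolutely continuous part. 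This yields $\sigma_{ac}(\hat h_{hor})=\mathbb{R}$ in the non-extremal case. I expect the delicate point to be the clean invocation of the perturbation criterion of \cite{hs-ac,weidz-ac} — establishing that an $L^1$ perturbation of the constant Dirac operator leaves the absolutely continuous spectrum equal to all of $\mathbb{R}$ and introduces no singular continuous part — rather than the decay estimate itself, which is routine and relies only on non-extremality (the simple zero of $\Delta_r$); in the extremal case the double zero spoils the exponential rate, which is precisely why only the weaker statement recorded in the text survives there.
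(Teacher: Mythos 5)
Your proof is correct and follows essentially the same route as the paper's: pass to the tortoise coordinate $y$ of (\ref{ytortoise}), write $V(r(y))=\varphi_+\mathbb{I}+P_2(r(y))$ with $P_2\in L^1$ near the horizon (the paper does not even spell out the exponential rate you derive from the simple zero of $\Delta_r$), and invoke the standard perturbation results for Dirac systems (\cite{hs-ac}, or Theorem 16.7 of \cite{weidmann}). The one point the paper treats that you pass over is the threshold $\omega=\varphi_+$: those theorems only yield absolute continuity on $\mathbb{R}\setminus\{\varphi_+\}$, which already gives $\sigma_{ac}(\hat h_{hor})=\mathbb{R}$ by closedness of the a.c.\ spectrum, but the paper additionally runs a Levinson-theorem asymptotic analysis at $\omega=\varphi_+$ to exclude an embedded eigenvalue there --- a step that is not needed for this lemma as stated but is essential for the later claim that the radial spectrum is \emph{purely} absolutely continuous (hence that the point spectrum of $\bar{\hat H}$ is empty).
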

\begin{proof}
The hypotheses of theorem 1 p. 185 of \cite{hs-ac} are verified.
Equivalently in our case we can appeal to theorem 16.7 of \cite{weidmann}, and we find that
the spectrum of $\hat h_{hor}$ is absolutely continuous in ${\mathbb R}-\{\varphi_+\}$. 
This can be proved as follows. Let us write the potential $V(r(y))$ in (\ref{hamilton-tortoise})
\eqn
V(r(y))=\left(
\begin{array}{cc}
\varphi_+  & 0\cr
0 & \varphi_+
\end{array}
\right)+P_2 (r(y)),
\label{p1p2}
\feqn
which implicitly defines $P_2 (r(y))$. The first term on the left of \ref{p1p2} is of course of
bounded variation; on the other hand, $|P_2 (r(y))|\in L^1 (c,\infty)$, with $c\in (0,\infty)$. As
a consequence, the hypotheses of theorem 16.7 in \cite{weidmann} are trivially satisfied, and one
finds that the spectrum of $\hat h_{hor}$ is absolutely continuous in  ${\mathbb R}-\{\varphi_+\}$.\\
We have only to exclude that $\varphi_+$ is not an eigenvalue of $\hat h_{hor}$ (and of the
radial Hamiltonian). As in the Kerr-Newman case (cf. e.g. \cite{yamada}), one needs simply to
replace $\omega$ with $\varphi_+$ and study the asymptotic behavior of the solutions of the linear system
\eqn
X' =
\left(
\begin{array}{cc}
-\lambda_{k;j} \frac{\sqrt{\Delta_r}}{r^2+a^2}
& \varphi_+ - \frac{1}{r^2+a^2} (a \Xi k +e q_e r -\mu r \sqrt{\Delta_r})\\
\frac{1}{r^2+a^2} (a \Xi k +e q_e r +\mu r \sqrt{\Delta_r})-\varphi_+  &
\lambda_{k;j} \frac{\sqrt{\Delta_r}}{r^2+a^2}
\end{array}
\right) X =:\bar{R}(r(y)) X,
\label{eqradialphi}
\feqn
where $r=r(y)$ and where the prime indicates the derivative with respect to $y$.
One easily realizes that in the non-extremal case
\eqn
\int_c^{\infty} dy |\bar{R}(r(y))|<\infty,
\feqn
and then according to the Levinson theorem (see e.g. \cite{eastham}, Theorem 1.3.1 p.8)
one can find two linearly independent
asymptotic solutions as $y\to \infty$ whose leading order is given by $X_I =\left(
\begin{array}{c} 1\\ 0 \end{array} \right)$ and $X_{II} =\left(
\begin{array}{c} 0\\ 1 \end{array} \right)$. As a consequence no normalizable solution of the
equation \ref{eqradialphi} can exists, and then $\varphi_+$ cannot be an eigenvalue. 
\end{proof}
Note that in the non-extremal case, theorem 16.7 in \cite{weidmann} applies also to $\bar{\hat h}_{k,j}$. 
The following result holds:
\begin{theorem}
$\sigma_{ac} (\bar{\hat h}_{k,j}) = {\mathbb R}$ in the non-extremal case. Moreover, the spectrum is simple.
\end{theorem}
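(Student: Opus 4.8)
The plan is to reduce the spectral analysis of the whole-line operator $\bar{\hat h}_{k,j}$ on $(r_{+},\infty)$ to the two half-line pieces $\hat h_{hor}$ and $\hat h_{\infty}$ that have already been analyzed, and then to read off both the absolutely continuous spectrum and its multiplicity from them. First I would invoke the decomposition method of \cite{weidmann}: imposing the separated boundary condition $B(X)=0$ at the interior regular point $r_0$ realizes the orthogonal sum $\hat h_{hor}\oplus \hat h_{\infty}$ on $L^2_{(r_{+},r_0)}\oplus L^2_{(r_0,\infty)}$, and this realization differs from $\bar{\hat h}_{k,j}$ only through a boundary condition imposed at the single regular point $r_0$. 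Both operators are therefore selfadjoint extensions of the same minimal operator with finite deficiency, so their resolvents differ by a finite-rank (hence trace-class) operator. By the Kato--Rosenblum theorem \cite{kato} the absolutely continuous parts of $\bar{\hat h}_{k,j}$ and of $\hat h_{hor}\oplus \hat h_{\infty}$ are unitarily equivalent, whence
$$
\sigma_{ac}(\bar{\hat h}_{k,j}) = \sigma_{ac}(\hat h_{hor}) \cup \sigma_{ac}(\hat h_{\infty}).
$$

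Next I would feed in the two lemmas. The lemma $\sigma_{ac}(\hat h_{hor})={\mathbb R}$ supplies the horizon contribution in the non-extremal case, while $\sigma_e(\hat h_{\infty})=\emptyset$ shows that the AdS-boundary piece has purely discrete spectrum, so $\sigma_{ac}(\hat h_{\infty})=\emptyset$. Combining these gives $\sigma_{ac}(\bar{\hat h}_{k,j})={\mathbb R}$. Equivalently---and this is the route suggested by the remark preceding the statement---one can apply theorem 16.7 of \cite{weidmann} directly to $\bar{\hat h}_{k,j}$ written in the tortoise coordinate $y$ of (\ref{ytortoise}): near the horizon the coefficient matrix $V(r(y))$ tends to $\varphi_+\,{\mathbb I}$ (a matrix of bounded variation) with an $L^1$ remainder $P_2(r(y))$, so the theorem yields absolute continuity on all of ${\mathbb R}-\{\varphi_+\}$; since $\varphi_+$ was already shown not to be an eigenvalue and $\sigma_{ac}$ is closed, this again produces $\sigma_{ac}(\bar{\hat h}_{k,j})={\mathbb R}$.

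For the simplicity assertion I would argue that the absolutely continuous spectrum is generated entirely at the horizon endpoint. The piece $\hat h_{hor}$ is a Dirac system on the half-line $(r_{+},r_0)$ with a regular endpoint at $r_0$ and a limit-point endpoint at $r_{+}$ (in the coordinate $y$ this is $y\to\infty$), so its Weyl--Titchmarsh coefficient is scalar and the a.c.\ spectrum it carries is simple (this is exactly the mechanism behind the use of theorem 10.8 of \cite{weidmann} for the angular operator $\bar{\hat{\mathbb U}}_{k\,\omega}$). The AdS-infinity endpoint contributes no a.c.\ spectrum, since $\hat h_{\infty}$ has discrete spectrum; hence there is no second channel that could raise the multiplicity to two. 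Under the Kato--Rosenblum unitary equivalence above, $\sigma_{ac}(\bar{\hat h}_{k,j})$ is identified with $\sigma_{ac}(\hat h_{hor})$ alone, and the full operator therefore inherits the simplicity of the horizon channel.

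The step I expect to be the main obstacle is precisely the simplicity claim: for a Dirac system with two singular endpoints the absolutely continuous spectrum can a priori have multiplicity two, so one must make rigorous that only the horizon endpoint carries a.c.\ spectrum and that this forces multiplicity one. The cleanest way to settle it is the argument just sketched---using that the infinity summand has empty a.c.\ spectrum so that $\sigma_{ac}(\bar{\hat h}_{k,j})\cong\sigma_{ac}(\hat h_{hor})$---together with the standard fact that a regular/limit-point half-line Dirac operator has simple a.c.\ spectrum; the remaining verifications (finite rank of the resolvent difference, applicability of theorem 16.7 at the horizon) are the routine technical inputs already prepared by the preceding lemmas.
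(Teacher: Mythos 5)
Your proposal is correct and follows essentially the same route as the paper: the paper's proof simply combines Lemma 2 ($\sigma_e(\hat h_{\infty})=\emptyset$) and Lemma 3 ($\sigma_{ac}(\hat h_{hor})={\mathbb R}$) via the decomposition method (citing remark (2) of Hinton--Shaw) and then invokes theorem 10.8 of Weidmann for simplicity. Your Kato--Rosenblum/finite-rank-resolvent justification of the decomposition, and your derivation of simplicity from the fact that the a.c.\ part is carried entirely by the single limit-point horizon channel, are just the standard details behind the two citations the paper relies on.
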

\begin{proof}
As a consequence of 
Lemma 2 and Lemma 3 (cf. also remark (2),
pp. 211-212 of \cite{hs-ac}), we can conclude that the spectrum of $\hat h_{k,j}$
is absolutely continuous and coincides with ${\mathbb R}$. On the grounds of theorem 10.8 in
\cite{weidmann}, we can also conclude that the spectrum is simple.
\end{proof}

As to the extremal case, we limit ourselves to state the following result.
\begin{lemma}
The spectrum of $\bar{\hat h}_{k,j}$ is absolutely continuous in ${\mathbb R}-\{\varphi_+\}$ in
the extremal case.
\end{lemma}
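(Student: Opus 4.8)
The plan is to follow the same route as in the non-extremal theorem (the one giving $\sigma_{ac}(\bar{\hat h}_{k,j})={\mathbb R}$), feeding in the weaker horizon information that is available when $r_-=r_+$. First I would invoke the decomposition method \cite{weidmann}: splitting the radial half-line at the regular interior point $r_0$ realizes $\bar{\hat h}_{k,j}$ as a finite-rank perturbation, in the resolvent sense, of the orthogonal sum $\hat h_{hor}\oplus \hat h_{\infty}$. Since the absolutely continuous subspace is invariant under trace-class (hence finite-rank) perturbations, the absolutely continuous part of $\bar{\hat h}_{k,j}$ is governed by the two half-line pieces, and in particular $\sigma_{ac}(\bar{\hat h}_{k,j})=\sigma_{ac}(\hat h_{hor})\cup \sigma_{ac}(\hat h_{\infty})$.

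Next I would insert the two inputs valid in the extremal case. The behavior at $r=\infty$ is insensitive to the nature of the horizon, so the proof of Lemma 2 still applies verbatim and $\sigma_e (\hat h_{\infty})=\emptyset$; consequently $\hat h_{\infty}$ has purely discrete spectrum and contributes no absolutely continuous, and no singular continuous, part. On the other hand the extremal horizon operator $\hat h_{hor}$ has, by the weaker result stated above, purely absolutely continuous spectrum on ${\mathbb R}-\{\varphi_+\}$. Combining, $\sigma_{ac}(\bar{\hat h}_{k,j})\supseteq {\mathbb R}-\{\varphi_+\}$.

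The remaining point, exactly as in the non-extremal case via remark (2), pp.~211--212 of \cite{hs-ac}, is to exclude point and singular continuous spectrum inside ${\mathbb R}-\{\varphi_+\}$, and in particular to rule out that the discrete eigenvalues of $\hat h_{\infty}$ survive as embedded eigenvalues of $\bar{\hat h}_{k,j}$. This is forced by the absolutely continuous (limit point, oscillatory) behavior at the horizon: for every $\omega\in {\mathbb R}-\{\varphi_+\}$ the solutions of the radial system $h_{k,j} X=\omega X$ are bounded and non-$L^2$ near $r_+$, so no solution can lie in $L^2 ((r_{+},\infty), \frac{r^2+a^2}{\Delta_r} dr)^2$ globally; by the subordinacy / Hinton--Shaw criterion \cite{hs-ac,weidz-ac} this simultaneously removes embedded eigenvalues and singular continuous spectrum on ${\mathbb R}-\{\varphi_+\}$. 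Hence the spectrum of $\bar{\hat h}_{k,j}$ is purely absolutely continuous there.

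The genuine obstacle, which I expect to be the crux, is the single value $\varphi_+$. In the non-extremal case $\Delta_r$ has a simple zero at $r_+$, the coefficient matrix $\bar{R}(r(y))$ of (\ref{eqradialphi}) is $L^1$ in the tortoise variable, and a Levinson argument (as in Lemma 3) produces two bounded asymptotic solutions at $\omega=\varphi_+$, excluding it as an eigenvalue. In the extremal case $\Delta_r$ has a double zero at $r_+$, the approach of the potential to $\varphi_+ {\mathbb I}$ is too slow for the $L^1$ hypothesis of Theorem 16.7 of \cite{weidmann} (and for the Levinson estimate) to hold, and the machinery fails precisely at $\omega=\varphi_+$. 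One therefore cannot decide the nature of the spectrum at that isolated value, which is exactly why the conclusion must be weakened from ${\mathbb R}$ to ${\mathbb R}-\{\varphi_+\}$.
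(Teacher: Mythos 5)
There is a genuine gap, and it sits exactly at the point where the lemma has content. Your second ``input'' --- that at an extremal horizon $\hat h_{hor}$ has purely absolutely continuous spectrum on ${\mathbb R}-\{\varphi_+\}$ --- is not an available hypothesis: the sentence in the text asserting it is a forward announcement of this very lemma, and nothing earlier in the paper (Lemma 3 is explicitly non-extremal) establishes it. Everything else in your proposal (the decomposition method, the discreteness of $\sigma(\hat h_{\infty})$, the exclusion of embedded eigenvalues) is bookkeeping around that unproved core, so as written the argument is circular. You correctly diagnose \emph{why} the non-extremal machinery fails --- with a double zero of $\Delta_r$ at $r_+$ one has $r-r_+\sim C/y$ in the tortoise variable, so the potential approaches its limit $\varphi_+\,\mathbb{I}$ only like $1/y$, which is not in $L^1$, and Theorem 16.7 of \cite{weidmann} and the Levinson estimate both break down --- but you stop there instead of supplying the replacement. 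The same omission undermines your step ruling out embedded eigenvalues, since knowing that solutions are ``bounded and non-$L^2$ near $r_+$'' for $\omega\neq\varphi_+$ again requires an asymptotic analysis at the extremal horizon that you have not performed.

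The replacement, which is what the paper's proof actually consists of, is the long-range (bounded-variation) criterion of Theorem 1, p.~185 of \cite{hs-ac}. Writing the radial equation in the tortoise coordinate $y$ of (\ref{ytortoise}) as a Dirac system with off-diagonal datum $\Delta_1=-\lambda_{k;j}\sqrt{\Delta_r}/(r^2+a^2)$ and entries $p_{1,2}=-\frac{1}{r^2+a^2}\left(a\Xi k+eq_e r\pm\mu r\sqrt{\Delta_r}\right)$, one checks that $\Delta_1\to 0$ and $p_1,p_2\to-\varphi_+$ as $y\to\infty$, and --- crucially --- that although these quantities decay only like $1/y$, their $y$-derivatives decay like $1/y^2$ and hence lie in $L^1[r_0,\infty)$. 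This bounded-variation condition is exactly what the Hinton--Shaw long-range theorem needs, and it yields absolute continuity on the complement of the limit set, which degenerates to the single point $\{\varphi_+\}$ because $p_1$ and $p_2$ share the same limit. That verification is the irreducible content of the lemma and is absent from your proposal. (Your heuristic that $\varphi_+$ must be excised ``because the machinery fails there'' is also not quite the right mechanism: in the long-range theorem the excluded set is dictated by the asymptotic values of the potential, not by where some estimate happens to fail; it is genuinely open here whether $\varphi_+$ is an eigenvalue, which is why the paper defers that point.)
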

\begin{proof}
We refer to theorem 1 p. 185 in \cite{hs-ac}. By using the tortoise coordinate (\ref{ytortoise}) we
rewrite the radial eigenvalue equation (\ref{radialequation}) in the form
\eqn
X' =
\left(
\begin{array}{cc}
p(r(y)) & \omega  + p_2 (r(y))\\
-\omega - p_1 (r(y)) & -p(r(y))
\end{array}
\right) X,
\label{eqradial-extr}
\feqn
where the prime stays for the derivative with respect to $y$,
$p(r):=-\frac{\lambda_{k;j} \sqrt{\Delta_r}}{r^2+a^2}$,
$p_1 (r):=-\frac{1}{r^2+a^2} ( a \Xi k + e q_e r +\mu r \sqrt{\Delta_r} )$ and
$p_2 (r):=-\frac{1}{r^2+a^2} ( a \Xi k + e q_e r -\mu r \sqrt{\Delta_r} )$.\\
Following the notation in \cite{hs-ac}, we define $p(r(y)):=\Delta_1$,
$p_{11} (r(y)):=p_1 (r(y))$ and $p_{21} (r(y)):=p_2 (r(y))$. One has
$p_{11} (r(y))\to -\varphi_+$ and $p_{21} (r(y))\to -\varphi_+$ as $y\to \infty$;
moreover, $\Delta_1(r(y))\to 0$ as $y\to \infty$ and
${p'}_{11},{p'}_{21},{\Delta'}_1 \in L^1 [r_0,\infty)$, with $r_{+}<r_0<\infty$. Then the hypotheses of
theorem 1 p. 185 in \cite{hs-ac} are satisfied, and then the spectrum of $\hat h_{k,j}$
is absolutely continuous in ${\mathbb R}-\{\varphi_+\}$. Cf. also Remark (1) p. 211 in \cite{hs-ac}.
\end{proof}
The analysis of the point $\{\varphi_+\}$ is more involved than in the non-extremal case
and is deferred to further studies.\\
The above result allow us to conclude also that the following result holds for the essential spectrum 
of the radial Hamiltonian:
\begin{corollary}
$\sigma_{e} (\bar{\hat h}_{k,j}) = {\mathbb R}$ both for the non-extremal case and for the extremal one. 
\end{corollary}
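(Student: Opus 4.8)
The plan is to read this off from the two preceding results using only elementary spectral facts, so that no fresh analysis of the radial system is needed. I would rely on three general properties valid for any selfadjoint operator $T$: the spectrum is real, the essential spectrum is a closed subset of ${\mathbb R}$, and the absolutely continuous spectrum is contained in the essential spectrum, i.e. $\sigma_{ac}(T)\subseteq \sigma_e(T)\subseteq \sigma(T)\subseteq {\mathbb R}$. The single point requiring attention is the isolated value $\varphi_+$ excluded in the extremal statement.

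In the non-extremal case I would simply invoke the theorem just established, namely $\sigma_{ac}(\bar{\hat h}_{k,j})={\mathbb R}$. Since $\sigma_{ac}\subseteq \sigma_e$, this yields ${\mathbb R}=\sigma_{ac}(\bar{\hat h}_{k,j})\subseteq \sigma_e(\bar{\hat h}_{k,j})\subseteq {\mathbb R}$, and hence $\sigma_e(\bar{\hat h}_{k,j})={\mathbb R}$ immediately.

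In the extremal case the preceding Lemma tells us that the spectrum of $\bar{\hat h}_{k,j}$ is purely absolutely continuous on ${\mathbb R}-\{\varphi_+\}$, so that ${\mathbb R}-\{\varphi_+\}\subseteq \sigma_{ac}(\bar{\hat h}_{k,j})\subseteq \sigma_e(\bar{\hat h}_{k,j})$. Here I would use the closedness of $\sigma_e$: since $\varphi_+$ is a limit point of ${\mathbb R}-\{\varphi_+\}$, taking closures gives $\sigma_e(\bar{\hat h}_{k,j})\supseteq \overline{{\mathbb R}-\{\varphi_+\}}={\mathbb R}$, and together with $\sigma_e(\bar{\hat h}_{k,j})\subseteq {\mathbb R}$ one concludes $\sigma_e(\bar{\hat h}_{k,j})={\mathbb R}$ in the extremal case as well. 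In this way the undecided point $\varphi_+$ is absorbed into the essential spectrum purely by closedness, so that the delicate question of whether $\varphi_+$ is an eigenvalue (deferred to further studies) is irrelevant for the statement about $\sigma_e$. Consequently I do not expect any genuine obstacle here; the only care needed is to argue about $\sigma_e$ alone at $\varphi_+$, rather than about $\sigma_{ac}$ or $\sigma_p$, for which the closedness of the essential spectrum is exactly the right tool.
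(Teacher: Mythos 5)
Your argument is correct and coincides with the paper's own proof, which likewise deduces the non-extremal case from Theorem 5 via $\sigma_{ac}\subseteq\sigma_e$ and the extremal case from Lemma 4, with the point $\varphi_+$ absorbed by the closedness of the essential spectrum. You have merely spelled out the elementary inclusions that the paper leaves implicit.
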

\begin{proof}
In the non-extremal case, the result follows from Theorem 5. In the extremal one, is a consequence 
of Lemma 4. 
\end{proof}
Both these results can also be achieved by using Theorem 16.6 in \cite{weidmann}. 
Indeed, let us consider the Hamiltonian (\ref{hamilton-tortoise}) and, in order to match Weidmann's 
conditions, let us introduce the unitary operator $Z:=\left( \begin{array}{cc}
0 & 1\\
1 & 0
\end{array} \right)$. Then let us consider $Z \hat h_{hor} Z^{\ast}$, whose formal expression is 
\eqn
Z h_{hor} Z^{\ast}=\left( \begin{array}{cc}
 0 & \partial_y\\
-\partial_y & 0
\end{array} \right)+\bar{P}(y),
\feqn 
and define  $\bar{P}_0:=\lim_{y\to \infty} \bar{P}(y) =\left( \begin{array}{cc}
\varphi_+ & 0\\
0 & \varphi_+ 
\end{array} 
\right).$ According to Theorem 16.6 in \cite{weidmann}, if 
\eqn
\lim_{y\to \infty} \frac{1}{y} \int_{c}^{y} dt |\bar{P} (t) - \bar{P}_0|=0,
\label{weid-16.6}
\feqn
where $|\cdot|$ stays for any norm for matrices in ${\mathbb C}^{2\times 2}$, then 
${\mathbb R}-\{\varphi_+\}\subset \sigma_e (\hat h_{hor})$, which implies $\sigma_e (\hat h_{hor})={\mathbb R}$. 
By using e.g. the Euclidean norm, 
it is easy to show that in the non-extremal case it holds $ |\int_{c}^{\infty} dt |\bar{P} (t) - \bar{P}_0||<\infty$, 
and then (\ref{weid-16.6}) is implemented. In the extremal case, the integral is divergent as $y\to \infty$ but De l'Hospital's rule 
allows still to conclude that (\ref{weid-16.6}) is implemented. Cf. also 
\cite{belmart} for the standard Kerr-Newman case and \cite{belgcaccia} for the 
Reissner-Nordstr\"{o}m-AdS case.

\subsection{Absence of time-periodic normalizable solutions}

For the non extremal case, the spectral analysis carried out in the previous subsections allows
to conclude that on the given black hole background the Dirac equation does not admit any
normalizable  time-periodic solution. Cf. \cite{finster-axi} and \cite{yamada} for the
Kerr-Newman case.
Both the hypothesis and the proof of such a theorem
are completely analogous to the ones relative to the non-extremal Kerr-Newman black hole case
appearing in \cite{yamada}, theorem IV.5, with simple and obvious replacements.
We can limit ourselves to observe that, given the Dirac equation in its Hamiltonian form
(\ref{hamilton-dirac}), it is possible to obtain solutions which are both normalizable and
time-periodic if and only if the point spectrum of the Hamiltonian $\bar{\hat H}$ is non-empty. 
For the sake of completeness we show that the aforementioned remark holds true. 
We are indebted to Franco Gallone (Universit\`a di Milano) for providing us a rigorous proof of it. 
\begin{remark}
Let $\hat G$ be a selfadjoint operator in a Hilbert space ${\mathcal H}$ and let $P$ the projection 
valued measure of $\hat G$; let $\hat U$ be the 1-parameter strongly continuous group generated by $\hat G$. 
There exist $T\in {\mathbb R}-\{0\}$ and $\psi\in {\mathcal H}-\{0\}$ such that  
$\hat U_{t+T} \psi = \hat U_{t} \psi$ for each $t\in {\mathbb R}$ iff the point spectrum of $\hat G$ 
is non-empty.
\end{remark} 
\begin{proof}
For $T\in {\mathbb R}-\{0\}$ and $\psi\in {\mathcal H}$,  
$\hat U_{t+T} \psi = \hat U_{t} \psi$ is equivalent to $\hat U_{T} \psi =  \psi$, which amounts to 
the condition 
\eqn
\int_{\mathbb R} |{\mathrm{e}}^{i\lambda T} -1|^2 d\mu_{\psi}^{(P)} (\lambda) =0,
\feqn
(where for each $E$ contained in the Borel $\sigma$-algebra ${\mathcal B}$ in ${\mathbb R}$, 
the measure $\mu_{\psi}^{(P)}$ 
is defined by $\mu_{\psi}^{(P)}(E):=||P(E)\psi||^2$).\\
The above condition is implemented iff ${\mathrm{e}}^{i\lambda T} =1$ $\mu_{\psi}$-a.e., which amounts to 
$\mu_{\psi}^{(P)} (\{ \frac{2\pi n}{T} \}_{n\in {\mathbb Z}})=||\psi||^2$, i.e. 
$P(\{ \frac{2\pi n}{T} \}_{n\in {\mathbb Z}})\psi=\psi$. Then there exists a $\psi\ne 0$ such that 
$\hat U_{t+T} \psi = \hat U_{t} \psi$ for every $t\in {\mathbb R}$ iff 
$P(\{ \frac{2\pi n}{T} \}_{n\in {\mathbb Z}})\ne 0$, and the latter condition holds 
iff there exists $n_0\in {\mathbb Z}$ such that 
$P(\{ \frac{2\pi n_0}{T} \})\ne 0$, i.e. such that $\frac{2\pi n_0}{T} \in \sigma_p (\hat G)$. 
\end{proof}
In the case at hand, $\hat G$ is the Hamiltonian operator $\bar{\hat H}$ and $t$ is the time. 
For a non-extremal Kerr-Newman-AdS black hole, we have that the point spectrum of $\bar{\hat H}$ is empty 
(cf. also Lemma 5.3 in \cite{baticschmid} and Proposition 7.1 in 
\cite{hafner} for then Kerr-Newman case)  
and then no time-periodic and normalizable solution of the Dirac equation is allowed. 
{F}rom a physical point of view, this fact means that no quantum mechanical
solution equivalent to a classical closed orbit exists. Cf. \cite{finster-rn,finster-axi}.

\section{Conclusions}

We have considered the Dirac equation on the universal covering of a Kerr-Newman-AdS 
black hole background. The presence of a magnetic charge has been allowed, and the Hamiltonian form
of the Dirac equation has been obtained. Then, we have shown that Theorem 1 holds true, and then 
we have studied essential selfadjointness properties of the Hamiltonian $\hat H$ on 
$C_0^{\infty} ((r_+,\infty) \times S^2)^4$ through the equivalent analysis on $\hat H_0$. 
Variable separation has been performed and 
we have shown that in presence of a magnetic charge the Dirac quantization condition
$\frac{q_m e}{\Xi}\in {\mathbb Z}$ is necessary and sufficient for ensuring essential selfadjointness  
of ${\mathbb U}_k$ on $C_0^{\infty} (0,\pi)^2$ for any $k\in {\mathbb Z}+\frac{1}{2}$. 
Moreover, $\mu l \geq \frac{1}{2}$ has to be implemented in order to obtain essential
selfadjointness of the radial Hamiltonian $\hat h_{k,j}$ on $C_0^{\infty} (r_+,\infty)^2$. 
This is also the condition
one finds on an Anti-de Sitter background.\\
Furthermore, we have taken into account some spectral properties of the Hamiltonian 
for $\mu l \geq \frac{1}{2}$. Qualitative
spectral analysis has allowed us to conclude that $\sigma_e (\hat h_{\infty})=\emptyset$, in 
contrast to what happens in the (asymptotically flat) Kerr-Newman case, 
and also that no time-periodic and normalizable solution of the
Dirac equation is allowed on the given non-extremal black hole background. The latter conclusion is in agreement with
the analogous result for black hole of the Kerr-Newman family and more in general for axi-symmetric
black holes which are asymptotically flat \cite{finster-axi,yamada}. Moreover,  in the case $\mu l < \frac{1}{2}$ 
this holds true also for selfadjoint extensions $\hat T_H$ obtained by imposing local boundary conditions at 
infinity. Cf. the comment following Lemma 2.\\
The implementation of a second-quantization formalism and the analysis of the mechanism
allowing both the discharge and the loss of angular momentum by the black hole by means
of quantum effects \cite{damo,belmart} deserve future investigations.

\section*{{\bf Acknowledgments}}

We are indebted to Franco Gallone (Universit\`a degli Studi di Milano) for several discussions and
fruitful comments and  for a rigorous proof of Remark 1. 
We thanks also Aldo Treves, Andrea Posilicano and Stefano Pigola (Universit\`a dell'Insubria)
for discussions and remarks.

\appendix

\section{Essential selfadjointness of $\hat{\mathbb U}$}
\label{essautoU}

For the sake of completeness, we discuss in detail the essential selfadjointness conditions for
the angular momentum operator ${\mathbb U}$. Let us introduce $d:= \frac{q_m e}{\Xi}$ and also the
sets
\eqn
I_0:= (-1+d,d),
\feqn
which is such that for $n\in I_0$ the limit circle case \cite{weidmann} is implemented at $\theta=0$, and
\eqn
I_{\pi}:= (-1-d,-d),
\feqn
which is such that for $n\in I_{\pi}$ the limit circle case is implemented at $\theta=\pi$.
Essential selfadjointness for a generic $n\in {\mathbb R}$ is implemented for
\eqn
n\in {\mathbb R}-(I_0\cup I_{\pi}).
\label{esscondU}
\feqn
Condition (\ref{esscondU}) for $|d|\leq \frac{1}{2}$ amounts to
\eqn
n\in (-\infty,-1-|d|]\cup [|d|,\infty).
\label{condmin}
\feqn
If $|d|>\frac{1}{2}$, condition (\ref{esscondU}) amounts to
\eqn
n\in (-\infty,-1-|d|]\cup [-|d|,-1+|d|]\cup [|d|,\infty).
\label{condmax}
\feqn
Now we make use of the fact that $n\in {\mathbb Z}$ in our case. For $d\in {\mathbb Z}$,
essential selfadjointness is implemented without restrictions. If $|d|:=[|d|]+\zeta$,
where $\zeta\in (0,1)$, then essential selfadjointness is implemented for
$n\in {\mathbb Z}-\{-1-[|d|],[|d|] \}$, which  requires further analysis. 
We can note that choosing $d\in {\mathbb Z}$ amounts to the Dirac quantization condition in the 
stronger form which was considered as mandatory by Schwinger in his model for a relativistic 
quantum field theory for fermionic matter in presence of a magnetic charge \cite{schwinger} and 
in which only integer values are allowed instead of semi-integer values. See also \cite{physrep}. 
This quantization rule is confirmed also by a construction \`a la Wu and Yang \cite{wu}. 
See e.g. \cite{ghosh} (formula (11) therein and its consequences for $a=0$ on the charge 
quantization, which corresponds to our case).\\
For the sake of completeness, we point out that 
semi-integer values for $d$ would introduce the necessity of implementing 
suitable boundary conditions at $\theta=0$ or at $\theta=\pi$ for ${\mathbb U}_k$ for special values of $k$, 
which are such that the defect indices of ${\mathbb U}_k$ are $(1,1)$. Let us put $|d|=l+\frac{1}{2}$, 
with $l\in {\mathbb N}$.  
Then we find that $n=-l-1$ and $n=l$ do not satisfy the essential selfadjointness conditions 
indicated above. Being $I_0\cap I_{\pi}=\emptyset$ for $|d|=l+\frac{1}{2}$, which means that the limit 
circle case occurs only at one of the extremes of $(0,\pi)$, one finds that 
the partial wave operators ${\mathbb U}_{-l-\frac{1}{2}}$ and ${\mathbb U}_{l+\frac{1}{2}}$ have defect indices $(1,1)$. 
This would introduce an ``asymmetric'' treatment for a couple of partial waves labeled by $k$ with respect to 
all the other ones which do not require boundary conditions. 
This ``asymmetry'' would not be justified by any physical argument, being the singular behavior at $\theta=0$ 
or at $\theta=\pi$ only due to a bad behavior of the chart one is forced to introduce if a single-chart 
description of the 1-form connection $A$ is adopted, as in our case; then the following choice is 
taken into account: a core for the extension is identified with 
$C_0^{\infty} (0,\pi)^2+L \{ w_+\}$ \cite{weidmann}, where $w_{+}$ is such that near the point $\theta_0$ 
($\theta_0=0$ or $\theta_0=\pi$) at 
which the limit circle case occurs one has $w_{+}\sim u_{s}$ 
($u_{s}\sim  (\theta-\theta_0)^s$ being the fundamental solution of the equation ${\mathbb U}_k u -\lambda u=0$ 
near $\theta_0$ with $s=\max (\pm \nu)$  if $\theta_0=0$, cf. (\ref{enne0}), or $s=\max (\pm \rho_0)$ 
if $\theta_0=\pi$, cf. (\ref{emme0})), and $w_{+}\sim 0$ at the other extreme. This is the only choice 
ensuring, together with the condition $\mu l\geq \frac{1}{2}$, that  $\hat H_0$ is essentially selfadjoint 
on $C_0^{\infty} ((r_+,\infty) \times S^2)^4$ (lack of smoothness at $\theta_0$ occurs for any other 
choice).\\ 
It must be pointed out that another choice of the gauge would be possible and that it would 
lead to different conditions. 
At least almost everywhere, gauge equivalence of the given $A$ with 
\eqn
A+b \left( \frac {q_m}{\rho \sqrt {\Delta \theta} \sin \theta} e^1-\frac 
{q_m a}{\rho \sqrt {\Delta_r}}e^0 \right)
\feqn
can be shown for any value of the real constant $b$. Then the 
replacements $q_m \cos \theta \mapsto q_m (\cos \theta -b)$  and $q_e r \mapsto q_e r +b q_m a$ 
occur. 
Essential selfadjointness properties of $\hat h_{k,j}$ remain unaltered for any $b$. 
Instead, if for example one chooses $b=1$, essential selfadjointness properties of 
${\mathbb U}_k$ are affected in an evident manner: this choice, which  
corresponds to the usual choice introduced since the seminal paper by Dirac, 
is such that all problems are shifted to $\theta=\pi$ and  
conditions (\ref{condmin}) and (\ref{condmax}) are replaced by the essential selfadjointness condition
\eqn
n\in (-\infty,-1-2 d]\cup[-2 d,\infty)
\label{condirac}
\feqn
Then, semi-integer values of the magnetic charge ensure essential selfadjointness too, 
in agreement with the general form of the Dirac quantization condition. We are not 
aware of a solution of the dichotomy between the physical situation represented by the 
Dirac string and the Schwinger one but for the explanation given in \cite{barut}: in fact 
for a fixed magnetic field the infinite singularity line embodied by the choice of the Schwinger potential 
is associated with a monopole of double strength with respect to the one which is 
associated with the Dirac semi-infinite singularity line (a double flux is generated by the former 
with respect to the latter). Hence a factor 2 appears. In other terms, the relation 
$q_m^{Schwinger}=2 q_m^{Dirac}$ should occur. 
Then it should be also true that 
the validity of the aforementioned gauge equivalence is only almost everywhere (smooth part) 
and that the singular part cannot be gauge-equivalent due to the different behavior of a 
semi-infinite string with respect to an infinite one. 
This interpretation would be also corroborated by the fact that, as it is evident, the 
corresponding transformation of the Hamiltonian cannot be implemented by means of a unitary transformation 
(a unitary transformation would preserve the essential self-adjointness properties).

\section{Spectrum of the operator $\bar{\hat{\mathbb {U}}}_{k\; \omega}$. Alternative proof.}
\label{spectrumU}

For simplicity, we take into account only the case of vanishing magnetic charge $q_m=0$.
We mean to make use of theorem 3 in \cite{hs-discrete}. The multiplication operator by 
\eqn
R W {\mathbb {V}}_{\omega}  W^{\ast} R^{\ast} =
a \omega \frac{\sin (\theta)}{\sqrt{\Delta_{\theta}}}
\left(
\begin{array}{cc}
0 & 1\\
1 & 0
\end{array}
\right)
\feqn
is a bounded perturbation of the operator $ R W\bar{\hat {\mathbb {U}}}_k  W^{\ast} R^{\ast}$. Then we consider first
the spectrum of the latter operator. In order to apply the theory described in \cite{hs-discrete}
it is useful to rewrite the (unperturbed) differential system 
$R W {\mathbb U}_k  W^{\ast} R^{\ast} \Theta = \lambda \Theta$ 
as follows:
\eqn
\Theta' = -\frac{k \Xi}{\Delta_{\theta} \sin(\theta)}
\left(
\begin{array}{cc}
1 & \frac{1}{k\Xi} [(\mu a \cos (\theta)+\lambda) \sqrt{\Delta_{\theta}} \sin(\theta)]\\
\frac{1}{k\Xi} [(\mu a \cos (\theta)-\lambda) \sqrt{\Delta_{\theta}} \sin(\theta)] & -1
\end{array}
\right) \Theta \Leftrightarrow \Theta' =: {\mathcal U} \Theta,
\label{ang-hs}
\feqn
where the prime stays for the derivative with respect to $\theta$. (Notice that, by comparing (1.4) of
\cite{hs-discrete} with our operator, we have to shift $\lambda \mapsto -\lambda$ in (1.1) of
\cite{hs-discrete}).
According to the theory in \cite{hs-discrete}, we are in the diagonally dominant case. Moreover,
in order to face with our problem, which displays two singular endpoints, we refer to Remark (3),
pp. 119-120, of \cite{hs-discrete}, according to which if discrete spectrum criteria are satisfied
at both endpoints, then the spectrum is discrete. 
Let us introduce two auxiliary selfadjoint operators: 
\eqn
&&D(\hat{\mathbb U}_0)=\{ \Theta\in L^2 ((0,c), \frac{d\theta}{\sqrt{\Delta_{\theta}}})^2;
\Theta\; \hbox{is locally absolutely continuous}; B (\Theta) =0;
\hat{\mathbb U}_0 \Theta\in L^2 ((0,c), \frac{d\theta}{\sqrt{\Delta_{\theta}}})^2 \},\cr
&&\hat{\mathbb U}_0\; \Theta = R W {\mathbb U}_k  W^{\ast} R^{\ast}\; \Theta, \quad \Theta \in D(\hat{\mathbb U}_0);\cr
&&D(\hat{\mathbb U}_{\pi})=\{ \Theta\in L^2 ((c,\pi), \frac{d\theta}{\sqrt{\Delta_{\theta}}})^2;
\Theta\; \hbox{ is locally absolutely continuous}; B (\Theta) =0;
\hat{\mathbb U}_{\pi} \Theta\in L^2 ((c,\pi), \frac{d\theta}{\sqrt{\Delta_{\theta}}})^2 \},\cr 
&&\hat{\mathbb U}_{\pi}\; \Theta = R W {\mathbb U}_k  W^{\ast} R^{\ast}\; \Theta, \quad \Theta \in D(\hat{\mathbb U}_{\pi}).
\feqn 
$0<c<\pi$ is an arbitrary (regular) point at which the boundary condition
$B (\Theta) = \sin (\beta) \Theta_1 (c) + \cos (\beta) \Theta_2 (c)=0$, with $\beta\in [0,\pi)$,
is imposed for both.\\  
According to the notation in \cite{hs-discrete}, we have $p(\theta) = -\frac{k\Xi}{\Delta_{\theta} \sin(\theta)}$,
$p_1 (\theta)=p_2 (\theta)= -\mu a \frac{\cos (\theta)}{\sqrt{\Delta_{\theta}}}\in L^1 (0,\pi)$, and moreover
$\alpha_1 (\theta) = \alpha_2 (\theta) = \frac{1}{\sqrt{\Delta_{\theta}}} \in L^1 (0,\pi)$. Then
$\frac{\alpha_k (\theta)}{p(\theta)}=-\sqrt{\Delta_{\theta}} \sin(\theta) \frac{1}{k\Xi} := \alpha_{k3}$,
with $k=1,2$; furthermore, $a_1=0=a_2$ and then $q_k (\theta)=p_k (\theta)$, and
$\frac{q_k (\theta)}{p(\theta)}=q_{k3} (\theta)$. According to the definitions in \cite{hs-discrete} (p.102),
$\alpha_1 (\theta) = \alpha_2 (\theta)$ and $p_k (\theta)$ are short range. As a consequence, we get
$\Gamma_1 = \Gamma_2 =0$ and $q_{k1}=0=q_{k2}$ (cf. \cite{hs-discrete} p. 112). Then in the notation of
\cite{hs-discrete}, p. 112, we obtain
\eqn
D_1 (\theta) =
\left(
\begin{array}{cc}
1  &  0\\
0  & -1
\end{array}
\right),
\feqn
and
\eqn
D_3 (\theta) =
\left(
\begin{array}{cc}
0                            &  q_{23}-\lambda \alpha_{23}\\
-q_{13}+\lambda \alpha_{13}  & 0
\end{array}
\right),
\feqn
whereas the matrix $D_2 (\theta)$ in our case is zero. Moreover we obtain
\eqn
\mu_0 (\theta, \lambda)=1
\feqn
and, as a consequence, the matrix ${\mathcal S}$ defined in \cite{hs-discrete}, p. 113, is the
identity matrix in our case. One also obtains
\begin{eqnarray}
E(\theta,\lambda) &=& \exp \left( \int_c^{\theta} dt p(t) \right) \cr
& = & \exp \left[- \frac{k}{2} \left( \frac{a}{l} \log \frac{l+a \cos (t)}{l-a \cos (t)}
-\log \frac{1+\cos (t)}{1-\cos (t)} \right) \right]\bigg|_{c}^{\theta},
\label{etheta}
\end{eqnarray}
and can define the matrix
\eqn
\Omega (\theta, \lambda)=
\left(
\begin{array}{cc}
E(\theta,\lambda)  &  0\\
0  & \frac{1}{E(\theta,\lambda)}
\end{array}
\right).
\feqn
The matrix $B(\theta)$ defined in \cite{hs-discrete}, p. 113, is zero. The matrix
$C(\theta)$ in our case coincides with $p(\theta) D_3 (\theta)$ and is absolutely integrable.
The matrix $G$ defined at p. 113 of \cite{hs-discrete} is identically zero in our case.
It is then easy to show that the hypotheses of theorem 3, p.113, of \cite{hs-discrete} are
satisfied. The limit point case criterion given in theorem 3 of \cite{hs-discrete}
is easily verified both at $\theta=0$ and at $\theta=\pi$ due to (\ref{etheta}),
confirming the analysis carried out in the previous sections. Moreover, the
criteria $\int_c^0 dt |p(t)|=\infty$ and $\int_c^{\pi} dt |p(t)|=\infty$ are both
satisfied, and then both $\hat{\mathbb U}_0$
and $\hat{\mathbb U}_{\pi}$ have discrete spectra. As a consequence, $\bar{\hat{\mathbb U}}_k$ has discrete
spectrum. As a consequence of theorem 10.8 in \cite{weidmann}, we can also conclude that
the spectrum is simple.\\
The bounded perturbation $W {\mathbb {V}}_{\omega} W^{\ast}$ affects the previous analysis in the following sense: the
discrete eigenvalues $\lambda$ get an analytic dependence on $\omega$. Indeed, $W {\mathbb {V}}_{\omega} W^{\ast}$
is infinitesimally small with respect to $W \bar{\hat {\mathbb {U}}}_k W^{\ast}$ \cite{RSII} and then
$W {\mathbb {U}}_{k\; \omega} W^{\ast}$ defines an analytical family of type (A) according to Kato's definition
\cite{kato,RSIV}. See in particular \cite{RSIV}, p.16.\\

\section{Discrete spectrum of $\hat h_{\infty}$. Alternative proof}
\label{spectrumhinf}

We refer to the results contained in \cite{hs-discrete}.
It is useful to rewrite (\ref{radialequation}) in the following form:
\eqn
X' =
\left(
\begin{array}{cc}
p(r) & -\omega \alpha (r) + p_2 (r)\\
\omega \alpha (r) - p_1 (r) & -p(r)
\end{array}
\right) X,
\label{eqradial}
\feqn
where $p(r):=\frac{\lambda_{k;j}}{\sqrt{\Delta_r}}$, $\alpha (r):=\frac{r^2+a^2}{\Delta_r}$,
$p_1 (r):=\frac{1}{\Delta_r} ( a \Xi k + e q_e r +\mu r \sqrt{\Delta_r} )$ and
$p_2 (r):=\frac{1}{\Delta_r} ( a \Xi k + e q_e r -\mu r \sqrt{\Delta_r} )$.\\

We study the spectral properties of $\hat h_{\infty}$.
In order to apply the theory exposed in \cite{hs-discrete}, theorem 1, pp. 102-103,
we put $p_{11}(r):=\mu r \frac{1}{\sqrt{\Delta_r}}$,
which is positive as far as $\mu>0$ in the physical case, and $p_{21}(r):=-\mu r \frac{1}{\sqrt{\Delta_r}}$. As
a consequence, being by assumption $p_1 (r)=p_{11}(r)+p_{12} (r)$ and analogously
$p_2 (r)=p_{21}(r)+p_{22} (r)$, one finds $p_{12} (r)=\frac{1}{\Delta_r} ( a \Xi k + e q_e r)$
and $p_{22} (r)=\frac{1}{\Delta_r} ( a \Xi k + e q_e r)$; furthermore, one obtains
$Q(r): = \sqrt{-p_{11}(r) p_{21}(r)}=p_{11}(r)$. Moreover, one gets
$\alpha_1 (r)=\alpha_2 (r)=\alpha (r)$, and then
$\frac{-\alpha_k (r)}{p_{k1} (r)}=r_{k3} (r)$, with $k=1,2$, where
$Q(r) r_{k3} (r) \in L^1 [r_0,\infty)$ (note that there is a misprint in \cite{hs-discrete}, p. 102,
regarding this condition: $\frac{-\alpha_k (r)}{p_{k2} (r)}$ is indicated in place
of $\frac{-\alpha_k (r)}{p_{k1} (r)}$). The latter property means that
$Q(r) r_{k3} (r)$ is short range (cf. \cite{hs-discrete}, p. 102) (note that $Q(r) \sim \frac{\mu l}{r}$
and $\frac{-\alpha_k (r)}{p_{k1} (r)}\sim (-)^{k} \frac{l}{\mu} \frac{1}{r}$ as $r\to \infty$).
As to the ratio $\frac{-p_{k2} (r)}{p_{k1} (r)}$, we find that both the terms are short range, and then
we can put $\frac{-p_{k2} (r)}{p_{k1} (r)}=s_{k3} (r)$. In our case we get
$r_{k1} (r)=0=r_{k2} (r)$ and $s_{k1} (r)=0=s_{k2} (r)$. The
function $\bar{\Delta}:=(p^{-1}_{11}(r) p'_{11}(r)-p^{-1}_{21}(r) p'_{21}(r)-p(r)) \frac{1}{Q(r)}$
(cf. \cite{hs-discrete}, p. 103) is such that $\bar{\Delta}=-\frac{\lambda_{k;j}}{\mu r}=\bar{\Delta}_3$,
where $\bar{\Delta}_3 Q(r)\in L^1 [r_0,\infty)$ and then is short range.\\
The function $\mu_0 (r,\omega)$ in our case
is $1$ and then the function $E (r,\omega)$ defined at p. 103 of \cite{hs-discrete} is
\eqn
E (r,\omega) = \exp \left( \int_{r_0}^{r} dt Q(t) \right)
\feqn
The condition (1.5) at p. 103 of of \cite{hs-discrete}, which is necessary and sufficient in order
to get the limit point case at infinity, in our case becomes:
\eqn
\int_{r_0}^{\infty} dt (E^2 (r,0)+E^{-2} (r,0)) (2 \alpha (r))=\infty;
\feqn
being $E^{\pm 2} (r,0)\sim r^{\pm 2 \mu l}$ and $\alpha (r)\sim \frac{l^2}{r^2}$ for
$r\to \infty$, one obtains the same conditions we found in our study of the essential selfadjointness
of the radial Hamiltonian. Moreover, when the limit point case is implemented, the condition
\eqn
\int_{r_0}^{\infty} dt Q(t)=\infty
\feqn
is sufficient in order to get a discrete spectrum, and this condition is easily verified in our case.
Then the spectrum of $\hat h_{\infty}$ is discrete. Note that, if the limit circle case occurs
at infinity, the spectrum is still discrete (cf. theorem 7.11 in \cite{weidmann}).

\section*{{\bf Note added}}

After this work was completed, we were made aware of Ref. \cite{winklyamada}. Therein a result 
about essential selfadjointness of $\hat H$ which is analogous to theorem 1 is stated for the Kerr-Newman case 
(cf. theorem 2.1 therein).

\end{document}